\documentclass[11pt,a4paper]{article}
\usepackage[margin=2.6cm]{geometry}
\usepackage{macros_vaib}
\usepackage{tocloft}
\newtheorem{theorem}{Theorem}[section]

\newtheorem{proposition}[theorem]{Proposition}
\newtheorem{lemma}[theorem]{Lemma}
\newtheorem{assumption}[theorem]{Assumption}

\newtheorem{remark}[theorem]{Remark}

\newtheorem{definition}[theorem]{Definition}

\title{Energy-optimal predictive control of 
discrete-time port-Hamiltonian systems: Closed-loop practical stability}
\author{Vaibhav Kumar Singh, Manuel Schaller, Timm Faulwasser, Karl Worthmann
\thanks{The authors gratefully acknowledge funding by the Deutsche Forschungsgemeinschaft (DFG, German Research Foundation) – Project-ID 519323897.}}

\date{}
\begin{document}
\maketitle
\begin{abstract}
We study a dissipativity-based model predictive control (MPC) scheme for energy-optimal constrained output stabilization of a discrete-time nonlinear SISO port-Hamiltonian system described by difference and differential representation. For the optimal control problem to be solved in each MPC step, we establish a measure turnpike behavior. The turnpike set in our work is obtained from the underlying port-Hamiltonian structure without assuming {existence} of steady-states  or periodic orbits. In particular, unlike existing results which typically establish the turnpike property w.r.t.\ a controlled forward invariant set such as optimal steady-states or optimal periodic orbits, we do not require the turnpike set to be forward control invariant. For MPC, we establish recursive feasibility and show practical stability to the turnpike set w.r.t.\ the MPC closed loop leveraging both dissipativity and the port-Hamiltonian structure. Finally, we demonstrate our results using two numerical examples.
\end{abstract}

{\bf Keywords:} Economic model predictive control, turnpike behavior, strict dissipativity, difference and differential representation   
\section{Introduction}\label{se:intro} \vspace{0mm} 
\ \ \ Port-Hamiltonian (pH) systems provide a systematic and modular framework for modeling, analysis, and control of physical systems governed by energy exchange, interconnection, and dissipation \cite{SchaJelt14, ScMa:13}. Their defining structure separates power-conserving interconnection mechanisms from dissipative components and external ports through which energy is exchanged between the system and its environment  \cite{Sc:04}. This representation is particularly suitable for interconnected and networked systems that include energy networks \cite{GrBeZh:25, HaMaMe:20, StMaCu:22, ShGe:25}, mechanical systems \cite{SaDiSc:25} and thermodynamic processes \cite{RaMeSb:13}.

Beyond modeling and control, pH systems have also attracted considerable attention in network optimization and optimal control problems, where their energy-based structure enables the systematic incorporation of physical constraints into optimization formulations, see for instance \cite{DoKlLaTo:23, PrGeSc:26}. For optimal control problems on pH systems with input $u$ and passive output $y$, a natural candidate for the cost function is the energy $\int_0^T u^\top(t) y(t) \mathrm{d}t$ supplied to the system at time~$t$ integrated on the time horizon $[0,T]$. Such control problems are referred as energy-optimal control problem (EOCP). For instance, given state $x_c(t)\in\rline^n$, input $u_c(t)\in\rline$, output $y_c(t)\in\rline$ and Hamiltonian $H(x)$, pointwise skew-symmetric $J(x)\in\rline^{n\times n}$ and pointwise symmetric positive semi-definite\footnote{The set of real numbers is $\rline$, and $\rline^n$ is $n$-dimensional column vector with all its elements in $\rline$. The space of measurable and absolutely integrable functions is denoted by  $L^1(0,T;\rline)$.} $R(x)\in\rline^{n\times n}$, a finite-horizon energy-optimal output stabilization problem for continuous-time nonlinear pH system has the form 
\begin{align}\label{eq:ct_ocp}\tag{CT-EOCP}
&\min_{u_c \in L^1(0,T;\uline)}\m\m \int\nolimits_0^T u_c(t) y_c(t) + {|c^\top x_c(t)-y_{ref}|^2} \mathrm{d}t \\[0.15em]
&\mathrm{s.t.}\, \ \dot x_c(t) = (J(x_c(t))-R(x_c(t)))Qx_c(t) + bu_c(t)\m, \quad x_c(0) = x_0\m \nonumber\\[0.15em]
&\mathrm{and}\, \  y_c(t)=b^\top Qx_c(t), \ \ u_c(t)\in\uline \qquad \forall t\in [0,T]. \nonumber
\end{align}
In \cite{PhScWo:24}, a thermodynamically consistent version of \eqref{eq:ct_ocp} has been discussed under the assumption that optimal trajectories are uniformly bounded in $T$ and it has been established that the open-loop optimal solutions of \eqref{eq:ct_ocp} spend most of their time around the set $\yline_{ct}= \zline_{ct}\cap\xline_r$, where $\xline_r = \{z\in\rline^n\mid c^\top z=y_{ref}\}$ and $\zline_{ct} = \{z\in\rline^n\mid R^{\nicefrac{1}{2}}(z)Qz = 0\}$\footnote{The Cholesky factor of $R$ is denoted by $R^{\tfrac{1}{2}}$.} is the conservative manifold of the pH system. This behavior of open-loop optimal solution of \eqref{eq:ct_ocp}, also termed as turnpike behavior, follows from the strict dissipativity w.r.t manifold property of \eqref{eq:ct_ocp} which is described by the dissipation inequality
\begin{equation*}\label{eq:diss_ineq_cont}
    H(x_c(T)) - H(x_c(0)) \leq \int_0^T u_c(t)y_c(t) - \alpha_l(\dist(x_c(t),\yline_{ct}))\operatorname{d}t  
\end{equation*}
for some $\alpha_l\in\Kscr_\infty$\footnote{A continuous function $\alpha:[0,\infty) \to [0,\infty)$ is said to be of class $\mathcal{K}_\infty$ if $\alpha(0)=0$, $\alpha$ is strictly increasing and unbounded. For a set $\yline$, denote $\dist(x,\yline) = \inf_{w\in\yline} \|x-w\|$.}. The set $\yline_{ct}$ is termed as turnpike set. An energy-optimal set-point tracking version of \eqref{eq:ct_ocp} for linear pH systems is presented in \cite{Schaller21}, for linear descriptor pH system is addressed in \cite{FaMaPh:22}, for nonlinear descriptor pH systems is discussed in \cite{Ka:24} and for infinite-dimensional pH system is in \cite{PhScFa:21}.  All of the aforementioned works exploit the underlying pH structure to establish strict dissipativity of the OCP, and hence obtain the turnpike behavior of open-loop optimal solutions, w.r.t.\ a conservative subspace/manifold. Note that the structure $u^\top y$ in the cost function of an energy-optimal control problem poses significant challenges \cite{FaKiMe:23}, however its advantages over standard cost functions such as $u^\top u$ have been studied numerically and presented in \cite{SeScWo:23, ScZeWo:24}. 

Motivated by these works, in \cite{SaSiSc:25} we investigated the EOCP for a discrete-time nonlinear pH system and utilized a difference and differential representation \cite{Monaco1998, Moreschini2019} to retrieve the strict dissipativity of the EOCP and turnpike property of open-loop optimal solutions in the discrete-time setting. Retrieving the turnpike property of open-loop optimal solutions for EOCP is particularly important 
to utilize optimization-based control such as (economic) model predictive control \cite{book:Grune2017}, where one iteratively solves a finite-horizon  optimal control problem to define a nonlinear feedback law and the resulting closed-loop trajectory is obtained by applying a portion of the computed open-loop optimal input sequence to the system.
While the turnpike property of open-loop optimal solutions has been discussed in aforementioned works on energy-{optimal} OCPs, the MPC closed-loop behavior 
has never been investigated before. In this work, we bridge this gap by embedding the EOCP within a receding-horizon framework and analyzing the  performance of the resulting closed-loop system. In particular, we utilize ideas from the field of {dissipativity-based model predictive control (MPC)}, also known as economic MPC \cite{book:Grune2017}. 

Dissipativity-based MPC, in the absence of terminal ingredients, usually relies on strict dissipativity of the optimal control problem and on appropriate reachability assumptions to establish turnpike behavior with respect to an optimal steady-state i.e. to ensure that the open-loop optimal solutions remain in a neighborhood of the optimal steady-state for majority of the time-horizon \cite{FaulBonv15,GrMu:16,FaGrMu:18}. Extension to problems when the turnpike object is a periodic orbit instead of a steady-state are discussed in \cite{GrSt:14, Mu:21, MuGr:16, ZaGrDi:17} and to problems where turnpike object is a general control-invariant set are in \cite{Gr:13, DoAn:18}. To utilize the turnpike properties of the open-loop optimal solutions for obtaining performance guarantees on the closed-loop solutions, the forward control invariance property of the turnpike object plays an essential role in all these aforementioned works. For EOCP of discrete-time pH systems considered in this work, the turnpike object is a set constructed using the conservative subspace/manifold of the system, which is available directly from the dynamics of the system and the cost function of the optimal control problem. Hence, unlike \cite[Section V]{SaSiSc:25}, we do not rely on solution of a steady-state optimization problem for defining the turnpike object. However, such a turnpike object does not have forward control invariance property, which makes it difficult to comment on the nature of the closed-loop solution obtained by applying dissipativity-based MPC algorithm. {Note that, to the best of our knowledge, this is first work to establish stability properties of the closed-loop optimal solution of energy-optimal control of pH systems.} 

In this work, we consider the energy-optimal output stabilization of a discrete-time nonlinear single-input single-output (SISO) pH system and establish practical stability of the closed-loop optimal trajectories, obtained via a receding horizon dissipativity-based MPC algorithm, to a set $\yline$. The construction of the set~$\yline$ stems naturally from the structure of the underlying pH system and the tracking output stabilization objective and hence does not require the knowledge of steady-states. We obtain the stability guarantees without imposing the restriction that the set $\yline$ has to be forward control-invariant. This generalizes the existing results on stability guarantees for dissipativity-based MPC algorithms such as \cite[Theorem 4.2]{Gr:13}, where the set $\yline$ is an optimal steady-state \cite[Section 5]{Gr:13}, an optimal periodic orbit \cite{MuGr:16} or a general forward control invariant set \cite{DoAn:18}. To overcome the challenge posed by our set $\yline$, we use strict dissipativity of {our OCP} with respect to this set and establish a measure turnpike property that is weaker than the standard notion of measure turnpike used in MPC literature \cite[Chapter 8]{book:Grune2017}. Note that the pH structure also plays important role in establishing recursive feasibility of the EMPC algorithm. The rest of the paper is organized as follows: In Section \ref{se:prelims} we summarize the {difference and differential (DD) representation} for discrete-time pH systems with quadratic Hamiltonian and we formulate the problem addressed in our work. In Section \ref{se:pract_conver} we present our results on recursive feasibility, the measure turnpike property and the practical stability of the closed-loop system. We illustrate our results using a linear and a nonlinear example in Section \ref{se:Mot_num_Ex}. We draw conclusions in Section \ref{se:concl}.

{\bf Notation:} The set of natural numbers is $\nline$. Denote $\nline_0=\nline\cup\{0\}$ and given $a,b\in\nline_0$ such that $a\leq b$, denote $[a:b]=[a,b]\cap\nline_0$. The set of real numbers is $\rline$ and $\rline_{\geq 0}=\{a\in\rline\mid a\geq 0\}$. A continuous function $\alpha:\mathbb{R}_{\geq 0} \to \mathbb{R}_{\geq 0}$ is said to be of class $\Kscr$ if $\alpha(0)=0$ and $\alpha$ is strictly increasing. If, in addition, $\alpha$ is unbounded, then $\alpha$ is said be of class $\Kscr_\infty$. We say $\sigma\in\Lscr_\nline$ if $\sigma:\nline\to\rline_{\geq 0}$ is decreasing with $\lim_{k\to\infty}\sigma(k)=0$. For a set $\yline$, $|\yline|$ denotes its cardinality and  $\dist(x,\yline) = \inf_{w\in\yline} \|x-w\|$.  Let $\operatorname{int} \yline$ define the interior of set $\yline$. Identity matrix and zero matrix of dimension $n\times n$ are denoted by $I_n$ and $\mathbf{0_n}$, respectively, and $0_n$ is a column vector with all its element zero.   \vspace{-3mm}
\section{Preliminaries and problem description}\label{se:prelims}\vspace{-2mm}
\ \ \ In this section, we recall the DD representation of discrete-time nonlinear systems from \cite[Chapter 4]{MoThesis:21}, discuss the notion of strict dissipativity with respect to a set and state our problem formulations. In our previous work \cite{SaSiSc:25}, this particular representation was crucial in retrieving certain results of \cite{Schaller21} and \cite{PhScWo:24} for the discrete-time pH setting. While the DD representation has been discussed for general non-linear pH systems, we restrict the discussion to pH system with quadratic Hamiltonian as that is sufficient for problem addressed in this work. We direct the reader to \cite{Monaco1998}, \cite{Moreschini2019}, \cite{moreschini2023dirac} and to the references therein for detailed discussion on DD representation of pH systems. 

\emph{Difference and differential representation:} Let $H:\rline^n\to[0,\infty)$ be such that $H(x)=\tfrac{1}{2}x^\top Qx$ and $Q\in\rline^{n\times n}$ is positive-definite. Given the Hamiltonian $H$, the difference and differential representation of the dynamics of a discrete-time single-input single-output (SISO) nonlinear pH system is  
\begin{align}
x_{k+1} &= x_k + \tfrac{1}{2}(J(x_k)-R(x_k))Q(x_k+x_k^+) + bu_k,  \label{eq:ddr_state} \\[0.14em]
y_k &= \tfrac{1}{2}b^\top Q(x_k^+ +x_{k+1}) \qquad \forall\m k\in\nline_0, \label{eq:ddr_output}
\end{align}
where $x_k\in\rline^n$ is the state, $u_k\in\rline$ is the input, $y_k\in\rline$ is the output, $J(x_k)\in\rline^{n\times n}$ is pointwise skew-symmetric, $R(x_k)=R^\top(x_k)\in\rline^{n\times n}$ is pointwise positive semi-definite, $b\in\rline^n$,
\begin{equation}\label{eq:phs_ddr_interm}
x_k^+ = x_k + \tfrac{1}{2}(J(x_k)-R(x_k))Q(x_k+x_k^+) \qquad\forall\m k\in\nline_0
\end{equation}
and $x_0\in\rline^n$ is the initial condition. By computing explicit expression $x_k^+$ in terms of $x_k$ from \eqref{eq:phs_ddr_interm} and using it to rewrite \eqref{eq:ddr_state}-\eqref{eq:ddr_output}, the dynamics of the pH-system is expressed in an explicit form 
\begin{align}
x_{k+1} &= J_1^{-1}(x_k)J_2(x_k)\m x_k+ bu_k \m, \label{eq:phs_ddr_state} \\
y_k&= b^\top Q J_1^{-1}(x_k)J_2(x_k)\m x_k + \tfrac{1}{2}b^\top Q bu_k \label{eq:phs_ddr_output}
\end{align}
for each $k\in\nline_0$ and initial condition $x_0$, where for any $x\in\rline^n$
$$J_1 (x) = I - \tfrac{1}{2}(J(x)-R(x))Q, \quad J_2 (x) = I + \tfrac{1}{2}(J(x)-R(x))Q.$$
The invertibility of $J_1(x)$ for positive semi-definite $R(x)$ and $Q$ is established in \cite[Lemma 1]{SaSiSc:25}. Henceforth,  we refer to the system described by \eqref{eq:phs_ddr_state}-\eqref{eq:phs_ddr_output} as the considered discrete-time pH system. Note that an explicit representation of the pH system is possible because of the quadratic Hamiltonian function $H$, which also allows us to directly compute 
\begin{equation}\label{eq:disc_ham_ineq}
    H(x_{k+1})-H(x_k) = u_ky_k -\|R^{\nicefrac{1}{2}}(x_k)QJ_1^{-1}(x_k)x_k\|^2_2.
\end{equation}
The aforementioned equation induces an energy balance, which is an inherent characteristics of pH systems \cite{SchaJelt14}. Indeed, from \eqref{eq:disc_ham_ineq} we have $H(x_{k+1})-H(x_k) \leq u_ky_k$, where the equality holds for any $x_k\in\rline^n$ if and only if $R^{\nicefrac{1}{2}}(x_k)QJ_1^{-1}(x_k)x_k=0$. Hence, interpreting $u_ky_k$ as the supplied energy, \eqref{eq:disc_ham_ineq} dictates that the change in energy (represented by $H$) of the pH-system can never exceed the supplied energy.
Define 
\begin{equation}\label{eq:cons_mani}
\zline = \{z\in\rline^n\m |\m R^{\nicefrac{1}{2}}(z)QJ_1^{-1}(z)z=0\}
\end{equation}
and note that $\zline$ is a manifold on which the non-negative dissipation term on the right-hand side of \eqref{eq:disc_ham_ineq} vanishes and the Hamiltonian of the pH system is preserved. Hence, we refer to $\zline$ as the conservative manifold of the pH system.
{\begin{remark}[Nonlinear descriptor form]\label{re:descriptor}
Let $z_k \coloneqq x_k^+$ for each $k\in\nline_0$. From \eqref{eq:phs_ddr_interm}, we have $z_k = J_1^{-1}(x_k)J_2(x_k)x_k$. Using $z_k$,  \eqref{eq:ddr_state}-\eqref{eq:phs_ddr_interm} can be transformed into a nonlinear quasi-Weierstrass descriptor form 
\begin{align*}
 \bbm{I_n & \mathbf{0_n}\\ \mathbf{0_n} & \mathbf{0_n}}\bbm{x_{k+1}\\z_{k+1}} &= \bbm{\mathbf{0_n}& I_n\\ \mathbf{0_n} & -I_n}\bbm{x_k\\ z_k} + \bbm{0_n\\ J_1^{-1}(x_k)J_2(x_k)x_k} + \bbm{b\\0_n}u_k,  \\[0.1em]
 y_k &= \bbm{0_n & b^\top Q}\bbm{x_k\\z_k} + \tfrac{1}{2}b^\top Q b\m u_k, \qquad \forall\m k\in\nline_0
\end{align*}
with $x_0\in\rline^n$ and $z_0 = J_1^{-1}(x_0)J_2(x_0)x_0\in\rline^n$. An example of a linear descriptor system in quasi-Weierstrass form is given in \cite[Eqs.~5(a)-5(b)]{ScFaWo:22}. 
\end{remark}}

\emph{Strict dissipativity of a discrete-time OCP:} We recall the notion of strict dissipativity w.r.t to a manifold $\yline$ from \cite[Definition 1]{SaSiSc:25}. Let $\xline\subset \rline^n$ and $\uline\subset \rline^m$ be the set of admissible state and input values. For $N\geq 1$ and initial state $x(0)=x_0\in \xline$, denote $\uline^N(x_0)= \{(u_k)_{k=0}^{N-1} \subset \uline \mid x(k,x_0)\in\xline, k \in [1:N] \}$, where $x(k, x_0)$ is the state at time $k \in [0:N]$ subject to $x(0,x_0) = x_0$ and application of $u=(u_k)_{k=0}^{N-1}$.
\begin{definition}[Strict dissipativity of discrete-time OCPs]\label{de:strict_diss_disc} 
For $x_0\in \mathbb{X}$, the discrete-time OCP 
\begin{align*}
\begin{split}
\min_{u\in\uline^N(x_0)}&\sum\nolimits_{k=0}^{N-1} l(x_k,u_k) \quad \mathrm{s.t.} \ \ x_{k+1} = f(x_k,u_k),\quad {y_k = g(x_k,u_k),} \\ 
&\mathrm{and}\, \ x(0,x_0) = x_0, \quad x(k,x_0)\in\xline \qquad \forall\m k\in[1:N].
\end{split}
\end{align*}
is said to be strictly dissipative w.r.t.\ a set $\yline\subset\rline^n$ if there exists storage function $S : \xline \to [0,\infty)$ and $\alpha\in\Kscr_{\infty}$ such that for all optimal control sequences $u^*\coloneqq (u_k^*)_{k=0}^{N-1}$ and associated optimal state $x^* \coloneqq (x^*_k)_{k=0}^{N}$, the dissipation inequality 
 \begin{equation}\label{eq:str_disp_subsp_K}
     S(x^*_{k+1})-S(x^*_k) \leq l(x_k^*,u_k^*) - \alpha(\dist(x^*_k,\yline))
 \end{equation}
 holds for each $k\in [0:N-1]$. Note that \eqref{eq:str_disp_subsp_K} leads to     $ S(x_N)-S(x_0) \leq \sum\nolimits_{k=0}^{N-1} l(x_k^*,u_k^*) - \alpha(\dist(x^*_k,\yline))$,
when summed over $k\in[0:N-1]$.
\end{definition}
{\it Problem description:} For the system in \eqref{eq:phs_ddr_state}-\eqref{eq:phs_ddr_output}, consider the finite-horizon energy-optimal output stabilization problem
\begin{align}
&\min_{u \in \uline^N(x_0)}\m \JJJ_N(x_0,u) \m =\m \min_{u_k \in \uline(x_0)}\m\m \sum_{k=0}^{N-1}  u_k y_k+ {|c^\top x_k-y_{ref}|^2} \nonumber\\
&\mathrm{s.t.}\ \ x_{k+1} = J_1^{-1}(x_k)J_2(x_k)\m x_k+ bu_k \m, \quad x_k\in\xline_C  \label{eq:e_ocp}\tag{E-OCP} \\[0.15em]
&\mathrm{and} \ \ y_k= b^\top Q J_1^{-1}(x_k)J_2(x_k)\m x_k + \tfrac{1}{2}b^\top Q\m bu_k, \ \ \forall k\in[0:N-1] \nonumber.
\end{align}
Here $x_0$ is the initial condition, {$\uline \subseteq\rline$ is a compact, convex control constraint set with  $0\in\operatorname{int}\uline$}, $c\in\rline^n$, $y_{ref}\in\rline$ is a constant reference, and 
\begin{equation}\label{eq:sublevel_set}
\xline_C = \{z\in\rline^n\m|\m H(z) = \tfrac{1}{2}z^\top Qz\leq C \}
\end{equation}
for some positive scalar $C>0$. 
\begin{remark}\label{re:xline_c}
The result in this work hold for arbitrarily large but finite choice of $C>0$ as long as the \eqref{eq:e_ocp} remains feasible. The requirement that $C>0$ is finite ensures that $\xline_C$ in \eqref{eq:sublevel_set} is compact, which is useful for defining our turnpike set $\yline$, see \eqref{eq:yline} and the discussion below Assumption \ref{as:feasiblity_track_ddr}.
\end{remark}
In the context of \eqref{eq:e_ocp}, we define the following notations and terminology: For $k\in\nline$, a control input $u_k$ is said to be an admissible input if $u_k\in\uline$ and state $x_{k+1}$ obtained by applying this input lies in $\xline_C$.
For $N\in\nline$, a control sequence $\{u_k\}_{k=0}^{N-1}$ is said to be admissible if $u_k\in\uline$ for all $k=[0:N-1]$ and $x_u(k,x_0)\in\xline_C$ for each $k=[0:N]$, where $x_u(k, x_0)$ is the state at time-instant $k$ obtained by solving \eqref{eq:phs_ddr_state} via control sequence $u_0,u_1,\ldots, u_{k-1}$. Whenever there is no ambiguity, we denote $x_k\coloneqq x_u(k, x_0)$. For any $N\in\nline$ and $x_0\in\xline_C$, define the optimal value function for \eqref{eq:e_ocp} as
\begin{equation}\label{eq:opti_val_func}
    V_N(x_0) = \inf_{u\in\uline^N(x_0)} \JJJ_N(x_0, u).
\end{equation}
To ensure the existence of optimal input sequence for \eqref{eq:e_ocp}, we impose the assumption below.
\begin{assumption}\label{as:exist_opti}
For each $x_0\in\xline_C$, there exists an optimal input sequence $u^*_{N,x_0}\in\uline^N(x_0)$ such that $V_N(x_0) = \JJJ_N(x_0, u^*_{N,x_0})$ holds for \eqref{eq:e_ocp}.
\end{assumption}

Next, we present the closed-loop system that is obtained by applying a receding horizon control algorithm and then state the objective of this work. For $N\in\nline$, let $\mu_N:\rline^n\to\rline$ denote a feedback control law. The dynamics of the resulting closed-loop system is
\begin{equation}\label{eq:cl_MPC}
    x^{cl}_{k+1} = J_1^{-1}(x^{cl}_k)J_2(x^{cl}_k)x^{cl}_k+ b\mu_N(x^{cl}_k) \quad \forall\m k\in[0:K-1]
\end{equation}
with $x^{cl}_0 = x_0$, where $x^{cl}_k$ is the closed-loop state obtained by applying the feedback control at time-step $k-1$ and $K\in\nline$. For horizon length $K\in\nline$ and initial condition $x_0\in\xline_C$, denote the closed-loop cost function by $\JJJ^{cl}_K(x_0, \mu_N)$. To identify $\mu_N$ and corresponding closed-loop state trajectory $x^{cl}$, we fix an optimization horizon $N\in\nline$ and employ the following receding horizon MPC algorithm: (i) Measure the current state $x^{cl}_{k}$ of the system for $k\in[0:K-1]$. (ii) With initial condition $x_k^{cl}$, solve for control action $u\in\uline^N(x^{cl}_k)$ that minimizes $\JJJ_N(x^{cl}_k,u)$ subject to the constraint in \eqref{eq:e_ocp} and denote the corresponding optimal solution by $u^*_{N,x^{cl}_k}$. (iii) Define the feedback control value $\mu_N(x_k^{cl}) = u_{N,x_k^{cl}}^*(0)$ and apply $\mu_N(x_k^{cl})$ to obtain the next state $x_{k+1}^{cl}$. (iv) Repeat Steps (i)-(iii) until $k=K-1$. The goal of this work is to establish the practical stability of the closed-loop state trajectory $x_k^{cl}$ obtained by applying the receding horizon controller $\mu_N$ in \eqref{eq:cl_MPC} to solve \eqref{eq:e_ocp}. 
\begin{remark}[Extension to multi-input multi-output setting]\label{re:mimo}
Since we consider a constant input map $b\in\rline^n$, an extension of the DD representation to multiple inputs is expected to be relatively straightforward. We nevertheless restrict our attention to the single-input case in order to focus on the main objective of this work, namely, closed-loop analysis of EOCP. DD representation of a two-input system with nonlinear input map has been discussed in \cite{Monaco_2011}, see also \cite[Remark 3.4]{Moreschini2019}, \cite[Remark 2.1]{moreschini2023dirac}.
\end{remark}
\begin{remark}[Discretization frameworks]\label{re:siso_ddr}
Apart from DD representation, there exist several other discrete-time formulations of pH systems including formulations based on discrete-gradient \cite{KiMoSc:25} and symplectic integration \cite{kotyczka2019, KoTh:21}. While these methods are tailored towards preserving passivity and designing stabilizing controllers \cite{Ma:23, MoKoLe:24}, employing them for obtaining the strict dissipativity of the EOCP in discrete-setting is not straightforward. For discrete-gradient type methods, this has been highlighted by a simple example in \cite[Section 3]{SaSiSc:25}.   
\end{remark}
\section{Closed-loop practical stability} \label{se:pract_conver}
\ \ \ Consider the closed-loop optimal state trajectory $x^{cl}=\{x^{cl}_0, x^{cl}_1, \ldots, x^{ck}_N\}$ obtained from \eqref{eq:cl_MPC} by applying the receding horizon MPC algorithm, that is discussed above Remark \ref{re:mimo}, to solve \eqref{eq:e_ocp}. In this section we will establish that for sufficiently large $N$, the trajectory $x^{cl}$ is practically stable with respect to a set $\yline$ (defined in \eqref{eq:yline}) i.e., the trajectory $x^{cl}$ spends most of its time near $\yline$ in an average sense. We begin by establishing that \eqref{eq:e_ocp} remains feasible at each step of the receding horizon algorithm. We construct the turnpike set $\yline$ and prove the strict dissipativity of \eqref{eq:e_ocp} w.r.t $\yline$. Using the strict dissipativity, we establish measure turnpike property of the open-loop optimal solution. Combining all these ingredients, we state and prove the main result of our work in  Theorem \ref{th:Gr13_Th4.2}. We denote $x^*_{i|k} \coloneqq x_{u^*}(i, k)$ to be the optimal state at $i^{\rm th}$ instant which is obtained by applying open-loop optimal input sequence $u^*_{0|k}, u^*_{1|k}, u^*_{2|k}, \ldots, u^*_{i-1|k}$ starting with initial state $x^*_{0|k}$. Note that the closed-loop input obtained via the receding horizon algorithm at time-instant $k$ is $u_k^{cl} = u^*_{0|k}$ and hence $x_{k+1}^{cl} = x^*_{1|k}$. 
\begin{proposition}[Recursive feasibility]\label{pr:rec_fea}
Let Assumptions \ref{as:exist_opti} hold. Then, \eqref{eq:e_ocp} is recursively feasible for any initial condition $x\in\xline_C$. 
\end{proposition}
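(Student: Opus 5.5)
The plan is to exhibit, for every $x\in\xline_C$ and every horizon $N$, an admissible input sequence. The natural candidate is the zero input $u_k=0$, which is admissible because $0\in\operatorname{int}\uline$. Once this is in hand, recursive feasibility follows immediately. Indeed, $\uline^N(x)\neq\emptyset$ for every $x\in\xline_C$, and Assumption \ref{as:exist_opti} then supplies an optimal sequence $u^*_{N,x}$. Its first element yields $x^{cl}_{1}=x^*_{1|0}$, which lies in $\xline_C$ by admissibility of $u^*_{N,x}$. Hence the problem posed at the next MPC step again has an initial condition in $\xline_C$, and induction over $k\in[0:K-1]$ completes the argument.

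The key step is to show that the zero input keeps the state in $\xline_C$. With $u_k=0$, the energy balance \eqref{eq:disc_ham_ineq} gives
\begin{equation*}
H(x_{k+1})-H(x_k) = -\|R^{\nicefrac{1}{2}}(x_k)QJ_1^{-1}(x_k)x_k\|_2^2\leq 0 .
\end{equation*}
Thus $H(x_{k+1})\le H(x_k)\le\dots\le H(x_0)\le C$, so $x_k\in\xline_C$ for all $k\in[0:N]$. Since $0\in\uline$, the zero sequence belongs to $\uline^N(x)$.

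I expect the main (mild) obstacle to be making sure that \eqref{eq:disc_ham_ineq} is legitimately available for arbitrary $x_k$, including the product term $u_ky_k$, which vanishes here. This requires the invertibility of $J_1(x)$, which is guaranteed by \cite[Lemma 1]{SaSiSc:25}. One might also verify the identity directly. Write $x^+=J_1^{-1}J_2x$, so that $x^+-x=\tfrac12(J-R)Q(x+x^+)$. Then
\begin{equation*}
H(x^+)-H(x)=\tfrac12(x^++x)^\top Q(x^+-x)=-\tfrac14\|R^{\nicefrac12}Q(x+x^+)\|^2,
\end{equation*}
where the skew-symmetric part drops out. Because $x+x^+=2J_1^{-1}x$, this is exactly the stated dissipation term, so no further assumptions are needed.

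Finally, I will note that for $K>N$ the argument does not rely on any terminal ingredients. Feasibility at each step depends only on the current closed-loop state lying in $\xline_C$, which holds by the admissibility of the applied optimal input.
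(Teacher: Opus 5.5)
Your proof is correct. It rests on the same key fact as the paper's: with $u=0$, the energy balance \eqref{eq:disc_ham_ineq} gives $H(x_{k+1})\le H(x_k)$, so $\xline_C$ is invariant under the zero input. The two arguments then use different candidate sequences.

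The paper uses the standard MPC shift-and-append candidate. It drops the first element of the optimal sequence computed at $x^{cl}_k$ and appends a single zero input. Invariance of $\xline_C$ is then needed only for the last step, and feasibility at $x^{cl}_{k+1}$ is inherited from feasibility at $x^{cl}_k$.

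You apply the zero input over the whole horizon. This proves the stronger statement that $\uline^N(x)\neq\emptyset$ for every $x\in\xline_C$ and every $N$, i.e., the feasible set of \eqref{eq:e_ocp} is all of $\xline_C$. Recursive feasibility then reduces to the closed-loop state staying in $\xline_C$, which follows from admissibility of the applied optimal input. Your route is more direct and does not depend on the previous optimal solution. It also shows that Assumption~\ref{as:exist_opti} is needed only for a minimizer to exist (so that $\mu_N$ is well defined), not for feasibility. Your check of the energy identity via $x+x^+=2J_1^{-1}(x)x$ is a correct, self-contained verification of a formula the paper only states.

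The paper's shift argument is the template that still works when feasibility is not global. For example, with a terminal constraint set that is merely controlled invariant, a pure zero-input sequence need not reach that set from an arbitrary initial state. The shift argument also yields a candidate tied to the previous optimal solution, the kind of construction reused in value-function decrease arguments. Without terminal constraints, as here, both proofs are complete.
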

\begin{proof}
For the pH system in \eqref{eq:e_ocp}, the set $\xline_C$ is controlled forward invariant i.e. for any $x_k\in\xline_C$ with $k\in\nline_0$ there exists a $u_k\in\uline$ such that $x_{k+1}\in\xline_C$. This is seen from the dissipation inequality \eqref{eq:disc_ham_ineq} by letting  $x_k\in\xline_C$ and $u_k=0\in\operatorname{int}\uline$ which leads to $H(x_{k+1})\leq H(x_k)\leq C$ for any $k\in\nline_0$. Hence $x_{k+1}\in\xline_C$.

Suppose that \eqref{eq:e_ocp} is feasible at the closed-loop state $x^{cl}_k$ and let $u^*_{0|k}, u^*_{1|k}, u^*_{2|k}, \ldots, u^*_{N-1|k}$ be the open-loop optimal input sequence for the initial condition $x^{cl}_k$. Owing to the feasibility of \eqref{eq:e_ocp} at $x^{cl}_k$, we have $x^*_{i|k}\in\xline_C$ for each $i\in[0:N]$, $u^*_{i|k}\in\uline$ for each $i\in[0:N-1]$ and so  $x^{cl}_{k+1}=x^*_{1|k}\in\xline_C$. Next, with initial state $x^{cl}_{k+1}$ consider the control sequence $\hat u_{i|k+1} = u^*_{i+1|k}$ if $i\in[0:N-2]$ and $\hat u_{N-1|k} = 0$. Clearly, the input $\hat u_{i|k+1}\in\uline$ for each $i\in[0:N-2]$ and the corresponding state $\hat x_{i|k+1}=x^*_{i+1|k}\in\xline_C$ for each $i\in[0:N-1]$. Lastly, since $\xline_C$ is invariant with $u=0\in\operatorname{int}\uline$, $\hat x_{N|k}\in\xline_C$. Hence, the optimal control problem \eqref{eq:e_ocp} is feasible at the initial condition $x^{cl}_{k+1}$ and recursively feasible for each $k\in \nline$ under the receding horizon algorithm described in Section \ref{se:prelims}.  
\end{proof}
Using $\zline$ defined in \eqref{eq:cons_mani}, $\xline_C$ in \eqref{eq:sublevel_set} and the set 
\begin{equation*}\label{eq:xline_r_set}
\xline_r = \{z\in\rline^n\m|\m c^\top z = y_{ref}\},
\end{equation*}
we define the turnpike set to be
\begin{equation}\label{eq:yline}
\yline = \zline\cap\xline_r\cap\xline_C.
\end{equation}
To establish the strict dissipativity of  \eqref{eq:e_ocp} with respect to the set $\yline$ in \eqref{eq:yline}, we impose the following assumption.
\begin{assumption}\label{as:feasiblity_track_ddr}
The map $x\to R^{\nicefrac{1}{2}}(x)QJ_1^{-1}(x)x$ is continuous and the turnpike set $\yline$ is nonempty. 
\end{assumption}
Assumption \ref{as:feasiblity_track_ddr} ensures that $\yline$ is compact. Indeed, by definition $\xline_C$ is a compact set and $\xline_r$ is closed. Continuity of map $x\to R^{\nicefrac{1}{2}}(x)QJ_1^{-1}(x)x$ ensures that the conservative manifold $\zline$ is also closed. Hence compactness of $\yline$ follows from the structure of $\yline$. Furthermore, via Assumption \ref{as:feasiblity_track_ddr}, there exists $x\in\xline_C$ such that $c^\top x=y_{ref}$ and $R^{\nicefrac{1}{2}}(x)QJ_1^{-1}(x)=0$. Next, we present a lemma that will be used to establish strict dissipativity of \eqref{eq:e_ocp} w.r.t $\yline$.
\begin{lemma}\label{le:sdp_Y}
Suppose Assumption \ref{as:feasiblity_track_ddr} holds. Then, there exists $\alpha_l \in \Kscr_\infty$ such that
\begin{equation}\label{eq:sdp_stage_cost}
uy + |c^\top x-y_{ref}|^2 \m\geq\m H(f(x,u)) - H(x) + \alpha_l(\dist(x, \yline)) 
\end{equation}
holds for any $x\in\xline_C$ and any $u\in\uline$, where  $f(x,u)= J_1^{-1}(x)J_2(x)\m x+ bu$.    
\end{lemma}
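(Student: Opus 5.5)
Using the energy balance \eqref{eq:disc_ham_ineq}, we have $uy = H(f(x,u)) - H(x) + \|R^{\nicefrac{1}{2}}(x)QJ_1^{-1}(x)x\|_2^2$ for every $x\in\rline^n$ and $u\in\uline$. Substituting this into the left-hand side of \eqref{eq:sdp_stage_cost}, the claim is equivalent to
\begin{equation*}
\psi(x) \coloneqq \|R^{\nicefrac{1}{2}}(x)QJ_1^{-1}(x)x\|_2^2 + |c^\top x-y_{ref}|^2 \;\geq\; \alpha_l(\dist(x,\yline)) \qquad \forall\, x\in\xline_C .
\end{equation*}
The input $u$ has dropped out, so the problem reduces to a pointwise lower bound on the nonnegative, $u$-independent function $\psi$ over the compact set $\xline_C$.

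The key properties of $\psi$ are the following. By Assumption \ref{as:feasiblity_track_ddr}, $\psi$ is continuous. Moreover, for $x\in\xline_C$ we have $\psi(x)=0$ if and only if $x\in\zline\cap\xline_r$, that is, if and only if $x\in\yline$. Since $\yline$ is nonempty and compact, $d(x)\coloneqq\dist(x,\yline)$ is continuous and bounded on $\xline_C$, say by $D\coloneqq \max_{x\in\xline_C} d(x)<\infty$. For $s\in(0,D]$ define
\begin{equation*}
\beta(s) \coloneqq \min\{\psi(x) \mid x\in\xline_C,\ d(x)\geq s\}.
\end{equation*}
The feasible set is compact and nonempty, so the minimum is attained. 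It is also positive: the minimizer satisfies $d(x)\geq s>0$, so $x\notin\yline$ and hence $\psi(x)>0$. Finally, $\beta$ is nondecreasing in $s$, because the feasible sets shrink as $s$ grows.

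The remaining step, and the main technical point, is to replace $\beta$ by a genuine $\Kscr_\infty$ minorant. The function $\beta$ need be neither continuous nor strictly increasing, and it is only defined up to $D$. I would use the standard construction
\begin{equation*}
\alpha_l(s)\coloneqq \frac{s}{D}\cdot\frac{1}{s}\int_0^{s}\min\{\beta(\tau),\beta(D)\}\,\mathrm{d}\tau \quad (s\le D), \qquad \alpha_l(s)\coloneqq \alpha_l(D)+ (s-D) \quad (s>D),
\end{equation*}
with $\beta$ extended by $\beta(0)=0$; any equivalent construction would also do. The averaged integral is continuous and satisfies $\frac1s\int_0^s\beta\le\beta(s)$ by monotonicity of $\beta$. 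The factor $s/D\le 1$ makes the function strictly increasing on $[0,D]$ while keeping it below $\beta(s)$. The linear extension beyond $D$ gives unboundedness, and it is harmless because $d(x)\le D$ on $\xline_C$.

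Then, for every $x\in\xline_C$, taking $s=d(x)$ gives $\psi(x)\ge\beta(d(x))\ge\alpha_l(d(x))$, and \eqref{eq:sdp_stage_cost} follows. The only genuine obstacle is verifying continuity and strict monotonicity of the regularized $\alpha_l$. Strict monotonicity holds because the integrand is positive on $(0,D]$, so $s\mapsto\int_0^s$ is strictly increasing. The rest is direct from the energy balance and compactness.
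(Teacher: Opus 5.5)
Your proposal is correct and follows essentially the same route as the paper. You use the energy balance \eqref{eq:disc_ham_ineq} to eliminate $u$, reducing the claim to a lower bound on $\|R^{1/2}(x)QJ_1^{-1}(x)x\|^2+|c^\top x-y_{ref}|^2$ over the compact set $\xline_C$; you bound it below by the nondecreasing, positive function $m(r)$ (your $\beta(s)$) and pass to a $\mathcal{K}_\infty$ minorant. The one difference is that you make the last step explicit via $\alpha_l(s)=\frac{1}{D}\int_0^s\beta(\tau)\,\mathrm{d}\tau$ on $[0,D]$ plus a linear extension, which the paper only asserts; this construction is valid, except in the trivial degenerate case $D=0$ (division by zero), where any $\mathcal{K}_\infty$ function works.
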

\begin{proof}
Let $x\in\xline_C$, $u\in\uline$ and $\BBB(x) =  \|R^{\nicefrac{1}{2}}(x)QJ_1^{-1}(x)x\|^2 + |c^\top x- y_{ref}|^2$. Using $\BBB(x)$ and \eqref{eq:disc_ham_ineq}, the left-hand side of \eqref{eq:sdp_stage_cost} is written as
\begin{align*}
u^\top y + |c^\top x-y_{ref}| = H(f(x,u))-H(x) + \BBB(x)  
\end{align*}
We will establish \eqref{eq:sdp_stage_cost} by proving that there exists a  class $\Kscr_{\infty}$ function $\alpha_l$ such that
\begin{equation}\label{eq:BBB_x}
\BBB(x)\geq \alpha_l(\dist(x,\yline)),\qquad \forall x\in\xline_C.    
\end{equation}
Note that the map $\BBB:\xline_C\to\rline_{\geq 0}$ is positive definite  on $\xline_C$ with respect to the set $\yline$ because $B(x)\geq 0$ for all $x\in\xline_C$ and $B(x)=0$ if and only if $x\in\yline$. For $r\geq 0$, define 
$$m(r) = \min_{\substack{x\in\xline_C, \\ \dist(x,\yline)\geq r}} \BBB(x).$$
Using Assumption \ref{as:feasiblity_track_ddr}, we get that $m(r)$ is well-defined since $\xline_C$, $\yline$, are nonempty compact sets and $B$ is a continuous map. In addition, (i) $m(0)=0$, (ii) $m(r)>0$ for every $r>0$ and (iii) $m$ is a non-decreasing function of $r$. Fix $r_{\max} = \max_{x\in\xline_C} \dist(x, \yline)$,
which is well-defined. Select a class $\Kscr$ function $\bar\alpha_l$ such that 
$\bar\alpha_l(r) \leq m(r)$ for all $r\in\m[0,\m r_{\max}]$.
For $r>r_{max}$, extend $\bar\alpha_{l}$ with an arbitrary class $\Kscr_{\infty}$ function to obtain $\alpha_l\in\Kscr_\infty$. Then, \eqref{eq:BBB_x} holds which further implies \eqref{eq:sdp_stage_cost}.
\end{proof}

The strict dissipation inequality in \eqref{eq:sdp_stage_cost} holds for all admissible input $u$ and admissible state $x$ and hence it also holds for the optimal input $u^*$ and associated optimal state $x^*$. Therefore, we naturally have strict dissipativity of (EOCP) w.r.t the turnpike set $\yline$ as per Definition \ref{de:strict_diss_disc} with stage cost
\begin{align*}
l(&x,u) = uy + |c^\top x_k-y_{ref}|^2 \\
&= H(f(x,u))-H(x) + \|R^{\tfrac{1}{2}}(x)QJ_1^{-1}(x)x\|^2 +|c^\top x_k-y_{ref}|^2,
\end{align*}
and $S = H$ in that definition. This can be checked by replacing $x$, $u$, $f(x,u)$ with $x^*_k$, $u_k^*$ and $x^*_{k+1}$, respectively, in \eqref{eq:sdp_stage_cost} and summing \eqref{eq:sdp_stage_cost} over $k\in [0:N-1]$. We remark that the result in Lemma \ref{le:sdp_Y} utilizes energy-balance equation in \eqref{eq:disc_ham_ineq}. For the rest of the discussion in this section, instead of working with $l(x,u)$,
we work with rotated stage cost $\tilde l(x,u) = l(x,u)- H(f(x,u))+H(x)$. Using Lemma \ref{le:sdp_Y}, we have
\begin{equation}\label{eq:tilde_l}
    \tilde l(x,u) = \|R^{\nicefrac{1}{2}}(x)QJ_1^{-1}(x)x\|^2_2 + {|c^\top x-y_{ref}|^2} \geq \alpha_l(\dist(x,\yline)).
\end{equation}
Using \eqref{eq:opti_val_func}, the rotated optimal value function is $\tilde V_N(x) = V_N(x) -H(f(x,u))+H(x)$ and it rewritten as
\begin{align}\label{eq:frodo}
\tilde V_N(x) = \sum_{k=0}^{N-1}\tilde l(x_{u^*}(k,x), u_k^*) \geq \sum_{k=0}^{N-1}\alpha_l(\dist(x_{u^*}(k,x), \yline)) 
\end{align}
for all $x\in\xline_c$ and for all $N\in\nline$. The associated optimal control problem is obtained by replacing $\JJJ_N$ and $l$  with $\tilde \JJJ_N$ and $\tilde l$, respectively. Note that the inequality in \eqref{eq:tilde_l}-\eqref{eq:frodo} is a result of the strict dissipativity with respect to $\yline$ property established in Lemma \ref{le:sdp_Y}. Furthermore, recursive feasibility in Proposition \ref{pr:rec_fea} remains unaffected if we replace $\JJJ_N$ in terms of $l(x,u)$ with $\tilde\JJJ_N$ in terms of $\tilde l(x,u)$. 
\begin{assumption}[Reachability of turnpike set $\yline$]\label{as:reach_Y}
There exists $M_0\in\nline$ such that for each $w\in\xline_C$ there is a $z\in\yline$ and an admissible control sequence $v\in\uline^{K_w}(w)$, such that $x_{v}(K_w,w) = z$ for $K_w\leq M_0$, where $z$, $v$ and $K_w$ depend on the initial state $w$.
\end{assumption}
Assumption \ref{as:reach_Y} ensures that the turnpike set $\yline$ is reachable from any state $x\in\xline_C$ in a finite number of time steps. This is not restrictive since $\xline_c$ is a compact set and the optimal control problem is recursively feasible inside $\xline_C$, see also Assumption 3.10 and Remark 3.12 in \cite{PhScWo:24}. However, unlike \cite{PhScWo:24} and most of the existing results on measure turnpike behavior via the strict dissipativity property, we do not have access to a controlled optimal steady-state, optimal periodic orbit or a general control-invariant set. Hence, one cannot ensure that a trajectory reaching some state inside $\yline$ can be forced to remain inside $\yline$ for the remaining part of the horizon. This makes it difficult to obtain the standard measure turnpike behavior, which usually requires an upper bound on $\tilde V_N(x)$ that is independent of the horizon length. Such a horizon-independent bound is often established using uniform cost-bounded reachability assumption to the optimal steady-state, optimal periodic orbit or a general control-invariant set, see for instance \cite[Proposition 4.1]{FaGrMu:18}. We handle the aforementioned difficulty using assumption below.
\begin{assumption}[Cheap approximate forward invariance of $\yline$]\label{as:tube_Y}
There exists a monotonically decreasing function $M_1:(0,\infty)\to\nline$ such that for each $z\in\yline$ there exists an admissible control sequence $u_{z,\delta}\in\uline^{L}(z)$  which ensures $x_{u_{z,\delta}}(k,z)\in\Bscr_{\delta}(\yline)$ for each $k\in[1:L-1]$ and  $x_{u_{z,\delta}}(L,z) \in\yline$ for every $L\geq M_1(\delta)$. Furthermore, the cost associated cost with such a state sequence has an upper bound of the form
\begin{equation}\label{eq:smaug}
\sum_{k=0}^{L-1}\tilde l(x_{u_{z,\delta}}(k,z),u_z(k))\leq \beta_0(L) \qquad  \forall L\geq M_1(\delta),
\end{equation} 
where $\beta_0:\nline\to(0,\infty)$ with $\lim_{L\to\infty}\tfrac{\beta_0(L)}{L} = 0$.  
\end{assumption}
Assumption \ref{as:tube_Y} ensures that once the state is initialized on the set $\yline$, it can be steered inside an arbitrarily prescribed $\delta$-neighborhood of $\yline$ for any sufficiently long horizon, to be brought back to $\yline$ at the end of the horizon. The accumulated rotated cost of this maneuver grows sub-linearly with its length. Consequently, the average rotated cost of such a maneuver vanishes as $L\to\infty$. The monotonicity of $M_1$ reflects the fact that smaller neighborhoods of $\yline$ impose a more restrictive requirement and may require longer horizons. If, in addition, the terminal condition in the statement of Assumption \ref{as:tube_Y} is replaced with $x_{u_{z,\delta}}(L,z)=z\in\yline$, then the maneuver is a periodic orbit. In the next result, we establish that for sufficiently large horizon $N\in\nline$, the open-loop optimal state sequence stays in a neighborhood of set $\yline$ for most part of the horizon.
\begin{theorem}[Measure turnpike w.r.t $\yline$]\label{th:meas_turn_Y}
Suppose that Assumption \ref{as:exist_opti} and Assumptions \ref{as:feasiblity_track_ddr}-\ref{as:tube_Y} hold. Let $x\in\xline_C$, $u^*\in\uline^N(x)$ be the open-loop optimal input sequence and $x_k^*\coloneqq x_{u^*}(k,x)$ with $k\in[1:N]$ be the associated open-loop optimal state sequence originating from initial state $x$. Then, for sufficiently large horizon $N\in\nline$, the cardinality\footnote{For a set $\Omega$, we denote $|\Omega|$ to be its cardinality.} of the set $\qline_\delta\coloneqq \{k\in[0:N]\mid \dist(x_k^*,\yline)\leq \delta\}$ satisfies
\begin{equation}\label{eq:meas_tunrpike}
|\qline_\delta| \geq N - \frac{\beta_1(M_0, N)}{\alpha_l(\delta)}
\end{equation}
for each $x\in\xline_C$ and every $u^*\in\uline^N(x)$ and some function $\beta_1$.
\end{theorem}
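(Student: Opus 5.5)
The proof follows the standard measure-turnpike argument from strict dissipativity, applied to the rotated problem. The core is an upper bound on $\tilde V_N(x)$ that grows sublinearly in $N$, combined with the lower bound \eqref{eq:frodo}.

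\emph{Lower bound.} By \eqref{eq:frodo}, for the optimal sequence $u^*$ every index $k\in[0:N-1]\setminus\qline_\delta$ has $\dist(x_k^*,\yline)>\delta$. Hence it contributes at least $\alpha_l(\delta)$ to $\tilde V_N(x)$, and
\[
\bigl(N-|\qline_\delta|\bigr)\,\alpha_l(\delta)\le \tilde V_N(x)+\alpha_l(\delta).
\]
The extra $\alpha_l(\delta)$ absorbs the endpoint $k=N$, which is not in the sum. It can equivalently be folded into $\beta_1$. It therefore suffices to show $\tilde V_N(x)\le \beta_1(M_0,N)$ with $\beta_1$ independent of $x$.

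One must also check that $u^*$, optimal for $\JJJ_N$, is optimal for $\tilde\JJJ_N$. This is not literally true: the two costs differ by the telescoping term $H(x_N)-H(x)$, which depends on the endpoint. I therefore bound $\tilde\JJJ_N(x,u^*)$ directly. Write $\tilde\JJJ_N(x,u^*)=\JJJ_N(x,u^*)+H(x)-H(x_N^*)$. Then for any admissible comparison sequence $\hat u$,
\[
\tilde\JJJ_N(x,u^*)\le \JJJ_N(x,\hat u)+H(x)-H(x^*_N)=\tilde\JJJ_N(x,\hat u)+H(\hat x_N)-H(x_N^*)\le \tilde\JJJ_N(x,\hat u)+C,
\]
since $H$ takes values in $[0,C]$ on $\xline_C$.

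\emph{Upper bound via a comparison sequence.} Fix $N$ large and construct $\hat u$ as follows.
\begin{itemize}
\item First use Assumption~\ref{as:reach_Y} to steer $x$ into some $z\in\yline$ in $K_x\le M_0$ steps, staying in $\xline_C$. By compactness of $\xline_C$ and continuity of $\tilde l$ (Assumption~\ref{as:feasiblity_track_ddr}), each stage costs at most $\bar l:=\max_{\xline_C}\tilde l<\infty$. Note that $\tilde l$ depends only on $x$.
\item Then apply the maneuver of Assumption~\ref{as:tube_Y} from $z$ with length $L=N-K_x$, which is admissible once $N-M_0\ge M_1(\delta_0)$ for some fixed $\delta_0$. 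By \eqref{eq:smaug} this costs at most $\beta_0(N-K_x)$.
\end{itemize}
Admissibility of the concatenation follows directly from the two assumptions. This gives
\[
\tilde V_N(x)\le M_0\bar l+\max_{L\in[N-M_0,N]}\beta_0(L)+C=:\beta_1(M_0,N).
\]
Because $\beta_0(L)/L\to0$, the bound in \eqref{eq:meas_tunrpike} is nontrivial: $|\qline_\delta|/N\to1$.

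\emph{Main obstacle.} The delicate step is the one just handled: the mismatch between optimality for $\JJJ_N$ and for $\tilde\JJJ_N$, caused by the free terminal storage term. Uniform boundedness of $H$ on $\xline_C$ resolves it at the cost of the additive $C$.

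A secondary point is the meaning of ``sufficiently large $N$'', namely $N\ge M_0+M_1(\delta_0)$. Admissibility of the tube maneuver is needed only for some fixed $\delta_0$, not for the $\delta$ in $\qline_\delta$. This is because the cost bound \eqref{eq:smaug} alone feeds the counting argument, while the $\delta$ in $\qline_\delta$ enters only through $\alpha_l(\delta)$.
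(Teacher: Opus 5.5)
Your proof is correct and follows the paper's argument: the lower bound $(N-|\qline_\delta|)\alpha_l(\delta)\le\tilde V_N(x)$ comes from \eqref{eq:frodo}, and the upper bound comes from concatenating the reachability maneuver of Assumption~\ref{as:reach_Y} with the tube maneuver of Assumption~\ref{as:tube_Y} of length $N-K_x$. Your additive $C$, which accounts for the terminal-storage mismatch between $\JJJ_N$-optimality and $\tilde\JJJ_N$-optimality, is a step the paper silently skips (it asserts $\tilde V_N(x)\le\sum_k\tilde l(x_{\hat v}(k,x),\hat v(k))$ ``by definition''), so your version is the more careful one; as a minor point, the extra $\alpha_l(\delta)$ in your lower bound is unnecessary, since $[0:N]$ has $N+1$ elements and hence $[0:N-1]\setminus\qline_\delta$ already contains at least $N-|\qline_\delta|$ indices, which avoids making $\beta_1$ depend on $\delta$.
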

\begin{proof}
Fix $x\in\xline_C$ and $\delta>0$. For the initial state $x$, consider the rotated optimal value function $\tilde V_N(x)$ defined in \eqref{eq:frodo} with the optimal input sequence $u^*\in\uline^N(x)$. Rewrite the inequality in \eqref{eq:frodo} using the sets $\qline_\delta$ and $[0:N]\setminus\qline_\delta$ as 
\begin{align*}
\tilde V_N(x) &\geq \sum_{k\in\qline_\delta} \alpha_l(\dist(x_k^*,\yline)) + \sum_{k\in [0:N]\setminus\qline_\delta} \alpha_l(\dist(x_k^*,\yline)).
\end{align*}
Since $\alpha_l\in\Kscr_\infty$, we get $\tilde V_N(x)\geq (N-|\qline_\delta|)\alpha_l(\delta)$ from the inequality above. Next, identify $K_x\in\nline$ with $K_x\leq M_0$, $z\in\yline$ and $v\in\uline^{K_x}(x)$ from Assumption \ref{as:reach_Y} for initial condition $x\in\xline_C$. Then, using $\delta>0$ and $z\in\yline$, we obtain $L\geq M_1(\delta)$, $u_{z,\delta}\in\uline^L(z)$ from Assumption \ref{as:tube_Y}. Construct a feasible input sequence $\hat v\in\uline^N(x)$ in the following way: 
\begin{equation*}
    \hat v(k) = \begin{cases} v(k) \text{\quad if\quad} k \in [0:K_x-1] \\
    u_{z,\delta}(k)\text{\quad if \quad} k\in[K_x:K_x+L-1]
    \end{cases}
\end{equation*}
such that $N = K_x+L\geq M_0+M_1(\delta)$. By definition of $\tilde V_N(x)$, we have
\begin{align}\label{eq:cantor}
\tilde V_N(x) \leq \sum_{k=0}^{N-1} \tilde l(x_{\hat v}(k,x),\hat v(k))\leq \tilde C_lK_x + \beta_0(L).
\end{align}
The final inequality in \eqref{eq:cantor} is a result of \eqref{eq:smaug} and the face that we have a finite upper bound $0<\tilde C_l<\infty$ on $\tilde l$ since $\xline_C$ is compact. Using $K_x\leq M_0$ and $L=N-K_x$, we get that $\tilde V_N(x)\leq \tilde C_lM_0 + \beta_0(N-K_x)$. Finally, using the lower bound on $\tilde V_N(x)$ along with the last inequality, $(N-|\qline_\delta|)\alpha_l(\delta) \leq \tilde C_lM_0 + \beta_0(N-K_x)$ implies that
\begin{equation}\label{eq:faramir}
|\qline_\delta| \geq N\left(1 - \frac{\tilde C_lM_0}{N\alpha_l(\delta)} - \frac{\beta_0(N-K_x)}{N\alpha_l(\delta)}\right),
\end{equation}
which is \eqref{eq:meas_tunrpike} with $\beta_1(M_0,N)=\tilde C_lM_0 + \beta_0(N-K_x)$.
\end{proof}
When an open-loop optimal solution exhibits standard measure turnpike behavior \cite{FaulBonv15}, the number of time-steps for which the optimal solution steps out of the $\delta$-neighborhood of the turnpike remains uniformly bounded for any $N\in\nline$. However, the measure turnpike property in Theorem \ref{th:meas_turn_Y} is different from the standard notion in the following way: The total number of time-instants at which the optimal state trajectory moves out of $\delta$-neighborhood of the turnpike can increase with $N$ but for sufficiently large $N$, the fraction of these bad time-instants with respect to $N$ decreases and hence we obtain that the open-loop optimal solution spends most of its time near the turnpike set $\yline$ for most part of the horizon. In the next lemma, which is a direct consequence of Theorem \ref{th:meas_turn_Y}, we show that there exist $M$ consecutive time-instants for which the open-loop optimal state remains in a $\delta$-neighborhood of the turnpike set $\yline$, when the horizon $N\in\nline$ is chosen to sufficiently large.
\begin{lemma}[Approximate interval turnpike w.r.t $\yline$]\label{le:block_turnpike}
Let Assumptions \ref{as:exist_opti} and Assumptions \ref{as:feasiblity_track_ddr}-\ref{as:tube_Y} hold. Given $M\in\nline$ and $\delta>0$, for any $x\in\xline_C$ there exists a $i_0\in [0:N-M]$ and $N_1\in\nline$ such that 
\begin{equation}\label{eq:prop_2(a)}
\dist(x_{u^*}(j,x),\yline) \leq \delta \qquad \forall j\in[i_0:i_0+M]
\end{equation}
holds for every $N\geq N_1$, for any optimal state trajectory $x_{u^*}(\cdot,x)$ with the associated optimal input sequence $u^*\in\uline^N(x)$.
\end{lemma}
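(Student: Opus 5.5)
The plan is a pigeonhole argument on top of Theorem \ref{th:meas_turn_Y}. Fix $M\in\nline$, $\delta>0$ and $x\in\xline_C$, and let $\qline_\delta$ be the set of good indices of the optimal trajectory, as in Theorem \ref{th:meas_turn_Y}. The bad set $\bline\coloneqq[0:N]\setminus\qline_\delta$ then satisfies
\[
|\bline|\le \frac{\beta_1(M_0,N)}{\alpha_l(\delta)} \le \frac{\tilde C_l M_0 + \beta_0(N-K_x)}{\alpha_l(\delta)} .
\]
This bound is uniform in $x\in\xline_C$ and in the particular optimal input $u^*$, provided we replace $\beta_0(N-K_x)$ by $\bar\beta_0(N)\coloneqq\max_{K\in[1:M_0]}\beta_0(N-K)$. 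Because $\beta_0(L)/L\to0$ and $M_0$ is fixed, we still have $\bar\beta_0(N)/N\to 0$.

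Next I would partition $[0:N]$ into consecutive disjoint blocks of $M+1$ indices each, namely $\{j(M+1),\dots,j(M+1)+M\}$ for $j=0,\dots,p-1$ with $p=\lfloor (N+1)/(M+1)\rfloor$. Each bad index lies in at most one block. So if $|\bline|<p$, at least one block contains no bad index. Taking $i_0=j(M+1)$ for that block gives $i_0\in[0:N-M]$ and $\dist(x_{u^*}(i,x),\yline)\le\delta$ for all $i\in[i_0:i_0+M]$, which is \eqref{eq:prop_2(a)}.

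It therefore suffices to choose $N_1$ so that
\[
\frac{\tilde C_l M_0+\bar\beta_0(N)}{\alpha_l(\delta)} < \Big\lfloor \frac{N+1}{M+1}\Big\rfloor \quad \text{for all } N\ge N_1 .
\]
We also need $N_1\ge M_0+M_1(\delta)$ so that Theorem \ref{th:meas_turn_Y} applies. Dividing by $N$, the left side is $o(1)\cdot N$, since $\tilde C_l M_0$ is constant and $\bar\beta_0(N)=o(N)$, while the right side is at least $\frac{N+1}{M+1}-1$, which grows linearly with slope $1/(M+1)>0$. Hence such an $N_1$ exists, and it depends only on $M$ and $\delta$, not on $x$ or $u^*$.

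I expect the main obstacle to be this last step, specifically checking that the bound from Theorem \ref{th:meas_turn_Y} is uniform. There are two points to verify. First, the construction in that proof requires splitting $N=K_x+L$ with $L\ge M_1(\delta)$, which holds for every $N\ge M_0+M_1(\delta)$. Second, the sublinearity of $\beta_0$ must survive the shift by $K_x\le M_0$. Here I would use that $\beta_0(N-K)/N=\frac{\beta_0(N-K)}{N-K}\cdot\frac{N-K}{N}\to0$ for each of the finitely many $K\in[1:M_0]$, so the maximum over these $K$ also tends to zero. The existence of $N_1$ then follows without needing any monotonicity of $\beta_0$.
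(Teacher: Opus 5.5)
Your proof is correct and follows essentially the paper's route. Both combine the sublinear bound on the number of indices with $\dist(x_k^*,\yline)>\delta$ from Theorem~\ref{th:meas_turn_Y} with a counting argument: the paper argues by contradiction (if no run of $M$ consecutive good indices exists, then $N\le M(N-|\qline_\delta|)+M-1$), whereas you use a direct pigeonhole over $\lfloor (N+1)/(M+1)\rfloor$ disjoint blocks, which also gives an $N_1$ explicitly uniform in $x$ and $u^*$ via $\bar\beta_0$. One small slip: $[0:N]$ has $N+1$ elements, so the theorem as stated only yields $|[0:N]\setminus\qline_\delta|\le 1+\beta_1(M_0,N)/\alpha_l(\delta)$, not your bound without the $1$; this extra term is harmlessly absorbed into the choice of $N_1$.
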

\begin{proof}
We will use proof by contradiction. Fix $x\in\xline_C$ and let $u^*\in\uline^N(x)$ with optimal state $x_{u^*}(k,x)=x^*_k$ for each $k\in[0:N-1]$ with $N\geq M_0+M_1(\delta)$, where $M_0$ is obtained from Assumption \ref{as:reach_Y} and $M_1(\delta)$ is obtained from Assumption \ref{as:tube_Y}. Given $M\in\nline$ and $\delta>0$, suppose that there does not exist $M$ consecutive time-steps $i_0, i_1, \ldots, i_{M}$ for any $i_0\in[0:N-M]$ such that \eqref{eq:prop_2(a)} holds. Then, the number of consecutive time-steps $k\in[0:N]$ for which $\dist(x^*_k, \yline)\leq\delta$ is at most $M-1$. Then, $|\qline_\delta|\leq (N-|\qline_\delta|+1)(M-1)$, where $\qline_\delta$ is defined from Theorem \ref{th:meas_turn_Y}. Upon rearranging this inequality, we get $N\leq M(N-|\qline_\delta|)+(M-1)$ which results in 
\begin{equation}\label{eq:gimli}
N\leq \frac{M(\tilde C_l M_0 + \beta_0(N-K_x))}{\alpha_l(\delta)} + M-1.
\end{equation}
The last inequality is a result of 
$$N-|\qline_\delta| \leq \frac{\tilde C_l M_0 + \beta_0(N-K_x)}{\alpha_l(\delta)},$$
obtained by rearranging \eqref{eq:faramir} in the proof of Theorem \ref{th:meas_turn_Y}. We obtain a contradiction from \eqref{eq:gimli}, once we divide both sides of \eqref{eq:gimli} by $N$, let $N\to\infty$ and use the fact that $\beta_0(N-K_x)/N \to 0$ as $N\to\infty$. Hence, for sufficiently large $N$, there must exist at least $M$ consecutive time-instant for which \eqref{eq:prop_2(a)} holds.
\end{proof}
The next assumption ensures existence of an admissible input sequence which allows a cost-controlled state transition between two states inside a $\delta$-neighborhood of $\yline$. 
\begin{assumption}[Cost-controlled excursion around $\yline$]\label{as:cheap_block_turn}
Let $\bar\delta>0$. For any $\delta\in(0,\bar\delta\m]$, any $z\in\Bscr_\delta(\yline)$ and $w\in\Bscr_\delta(\yline)$ there exists an admissible control sequence $\hat u\in\uline^{2M}(z)$, scalar $l_{\bar \delta}\geq 0$, $\rho:[0,\bar \delta]\to\rline_{\geq 0}$  with $\rho(\delta)\geq\delta$, $\alpha_{\tilde l}\in\Kscr_\infty$ and $\beta\in\Lscr_\nline$  such that $x_{\hat u}(2M,z) = w$,
\begin{align}
&\dist(x_{\hat u}(k,z), \yline) \leq \rho(\delta) \qquad \forall k\in [1:2M-1],   \label{eq:prop_1a}\\[0.1em]
&\sum_{k=0}^{2M-1} \tilde l(x_{\hat u}(k,z), \hat u(k)) \leq l_{\bar\delta} + \alpha_{\tilde l}(\rho(\delta)) +\beta(M).\label{eq:prop_1b}
\end{align}
\end{assumption}
We remark that the state sequence originating from $z\in\Bscr_\delta(\yline)$, via the admissible input defined by Assumption \ref{as:cheap_block_turn}, are allowed to move outside the $\delta-$neighborhood of $\yline$ as long as such \eqref{eq:prop_1b} holds for such an excursion. Recall $x^{cl}_k$ to be the closed-loop state trajectory obtained by applying the MPC algorithm described below \eqref{eq:cl_MPC}. The corresponding closed-loop control input applied at $k^{\rm th}$ time-instant is $u^{cl}_k$. We now state the main result of this work: the closed-loop optimal state sequence of \eqref{eq:e_ocp}, obtained via \eqref{eq:cl_MPC} is practically stable with respect to the turnpike set $\yline$ in an averaged sense.
\begin{theorem}\label{th:Gr13_Th4.2}
Suppose Assumptions \ref{as:exist_opti} and Assumptions \ref{as:feasiblity_track_ddr}-\ref{as:cheap_block_turn} hold and let $\bar\delta>0$ be given. Then,
\begin{equation}\label{eq:bound_limsup_JC}
\limsup_{P\to\infty} \frac{1}{P}\sum_{k=0}^{P-1} \alpha_l(\dist((x^{cl}_k, \yline)) \leq l_{\bar\delta} + \alpha_{\tilde l}(\rho(\delta)) + \beta(M).   
\end{equation}
holds for all $x\in\xline_C$ and all $N\in\nline$ sufficiently large.
\end{theorem}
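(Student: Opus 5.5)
This follows the template of \cite[Theorem 4.2]{Gr:13}, adapted to the non-invariant turnpike set $\yline$. We work throughout with the rotated stage cost $\tilde l$ and rotated value function $\tilde V_N$ from \eqref{eq:tilde_l}--\eqref{eq:frodo}. Since the rotation uses the storage $H$, the rotated closed-loop cost differs from the original one only by $H(x^{cl}_P)-H(x^{cl}_0)$. This difference is bounded by $C$ on $\xline_C$, so it disappears after dividing by $P$. Recursive feasibility (Proposition \ref{pr:rec_fea}) guarantees that $x^{cl}_k\in\xline_C$ for all $k$ and that $\tilde V_N(x^{cl}_k)$ is well defined. By \eqref{eq:tilde_l}, $\alpha_l(\dist(x^{cl}_k,\yline))\le \tilde l(x^{cl}_k,u^{cl}_k)$. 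It therefore suffices to prove
\[
\limsup_{P\to\infty}\frac1P\sum_{k=0}^{P-1}\tilde l(x^{cl}_k,\mu_N(x^{cl}_k))\le l_{\bar\delta}+\alpha_{\tilde l}(\rho(\delta))+\beta(M).
\]

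The core step is a one-step relaxed dynamic programming inequality of the form
\[
\tilde V_N(x^{cl}_{k+1})\le \tilde V_N(x^{cl}_k)-\tilde l(x^{cl}_k,\mu_N(x^{cl}_k))+\varepsilon ,
\]
with $\varepsilon:=l_{\bar\delta}+\alpha_{\tilde l}(\rho(\delta))+\beta(M)$. Once this holds, summing over $k\in[0:P-1]$ telescopes to
\[
\sum_{k=0}^{P-1}\tilde l(x^{cl}_k,u^{cl}_k)\le \tilde V_N(x^{cl}_0)-\tilde V_N(x^{cl}_P)+P\varepsilon .
\]
Here $\tilde V_N\ge 0$, and $\tilde V_N(x_0)\le N\tilde C_l$ on the compact set $\xline_C$. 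Dividing by $P$ and letting $P\to\infty$ then gives \eqref{eq:bound_limsup_JC}.

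To obtain the inequality, fix $x=x^{cl}_k$ with optimal trajectory $x^*_{\cdot|k}$. Apply Lemma \ref{le:block_turnpike} with window length $2M$ (plus a margin) and tolerance $\delta\le\bar\delta$. For $N\ge N_1$ this yields an index $i_0\ge 1$ such that $x^*_{i|k}\in\Bscr_\delta(\yline)$ on a window of length at least $2M$; if needed, we enlarge the window so that the starting point $i_0=0$ can be avoided. Next we build a candidate control for the initial state $x^{cl}_{k+1}=x^*_{1|k}$. It follows $u^*_{\cdot|k}$ from time $1$ up to the window start $i_0$, reaching $z=x^*_{i_0|k}$. Then it applies the excursion input $\hat u$ of Assumption \ref{as:cheap_block_turn}, which steers $z$ in $2M$ steps to $w=x^*_{i_0+2M-1|k}\in\Bscr_\delta(\yline)$. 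Finally, it copies $u^*_{\cdot|k}$ on the remaining tail. This net shift by one index replaces the $2M-1$ optimal steps between $z$ and $w$ with a $2M$-step excursion, so the candidate has exactly horizon $N$ starting from $x^{cl}_{k+1}$. It ends at the same terminal point $x^*_{N|k}\in\xline_C$, which makes it admissible.

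Comparing costs, the candidate cost equals $\tilde V_N(x^{cl}_k)-\tilde l(x^{cl}_k,u^{cl}_k)$, minus the (nonnegative) optimal cost on the removed segment, plus the excursion cost bounded by \eqref{eq:prop_1b}. Since $\tilde l\ge 0$, dropping the removed segment's cost yields the inequality with $\varepsilon$ as above.

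I expect the index bookkeeping in the splice to be the main obstacle. Three things need care there: the excursion must occupy exactly the right number of steps, the window must not start at $0$ (which is the closed-loop step already taken), and the uniformity of $N_1$ in $x\in\xline_C$ must be secured. For the uniformity, I would rely on the fact that the bounds in Theorem \ref{th:meas_turn_Y} depend only on $M_0$, $\tilde C_l$, $\beta_0$ and $\alpha_l(\delta)$, and not on $x$. This makes $N_1$ uniform over $\xline_C$, so that a single $N$ works along the whole closed loop.
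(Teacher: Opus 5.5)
Your proposal is correct and follows essentially the same route as the paper. Both arguments derive a one-step relaxed dynamic programming inequality by splicing the cost-controlled excursion of Assumption~\ref{as:cheap_block_turn} into the near-$\yline$ window from Lemma~\ref{le:block_turnpike}, then telescope and use $\tilde V_N\ge 0$. The only difference is the bookkeeping in the splice: you remove $2M-1$ optimal steps so that the candidate from $x^{cl}_{k+1}$ has horizon exactly $N$. The paper instead removes an $M$-step window, obtains a horizon-$(N+M-1)$ candidate, and then needs $\tilde V_N\le\tilde V_{N+M-1}$ (from $\tilde l\ge 0$). Your version is therefore slightly leaner, and it makes explicit two points the paper only asserts: that $i_0\ge 1$, and that $N_1$ is uniform in $x$.
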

\begin{proof}
Given $\bar\delta>0$, let $N\in\nline$ to be sufficiently large such that the results stated in Lemma \ref{le:block_turnpike} hold for \eqref{eq:e_ocp} with initial condition $x\in\xline_C$. Let $u^*\in\uline^N(x)$ be the open-loop optimal input sequence and $x_{u^*}(\cdot, x)$ be the corresponding open-loop optimal state sequence for which \eqref{eq:prop_2(a)} hold. For ease of notation, we will denote $x_{u^*}(\cdot, x)=x_k^*$, whenever obvious.  To establish \eqref{eq:bound_limsup_JC}, we will show that for all $N\in\nline$ chosen to be sufficiently large 
\begin{equation}\label{eq:gandalf}
\tilde V_N(x_{k+1}^{cl})-\tilde V_N(x_k^{cl}) \m\leq\m -\tilde l(x^{cl}_{k}, u^{cl}_{k}) +\epsilon(M,\delta, l_{\bar\delta})
\end{equation}
holds for each $k\in[0:P-1]$, where $\epsilon\in\Lscr_\nline$. To check that \eqref{eq:gandalf} leads to \eqref{eq:bound_limsup_JC}, sum \eqref{eq:gandalf} over $[0:P-1]$, simplify the right-hand side by performing telescopic sum and rearrange the terms to obtain
\begin{equation}\label{eq:sam}
    \sum_{k=0}^{P-1}\tilde l(x_k^{cl}, u_k^{cl}) \leq \tilde V_N(x) - \tilde V_N(x_P^{cl}) + P\epsilon(M,\delta, l_{\bar\delta}).
\end{equation}
Using the inequality in \eqref{eq:tilde_l} and the inequality in \eqref{eq:frodo} which gives $\tilde V_N(x_P^{cl})\geq 0$, \eqref{eq:sam} is rewritten as
 $$\sum_{k=0}^{P-1} \alpha_l(\dist(x_k^{cl},\yline)) \leq \tilde V_N(x) + P\epsilon(M,\delta, l_{\bar\delta}).$$
Then, on dividing the aforementioned inequality by $P$ and taking the limit supremum over $P$ results in \eqref{eq:bound_limsup_JC} by using the fact that $\tilde V_N(x)$ is finite for any $x\in\xline_c$ and $N\in\nline$.  

Next, we prove the inequality in \eqref{eq:gandalf}. For each $k\in[0:P-1]$ and $i\in[0:N]$, recall the notation described above Proposition \ref{pr:rec_fea}. Note that the closed-loop input applied at time-instant $k$ is $u_k^{cl} = u^*_{0|k}$ and hence $x_{k+1}^{cl} = x^*_{1|k}$. Using Lemma \ref{le:block_turnpike}, there exists a set of time-instants $[i_0:i_{M}]$ with $i_0>0$ and $i_{M}<N$ such that  
$$\alpha_l(\dist(x^*_{i|k}, \yline)) \leq \bar\delta \qquad \forall i\in[i_0:i_{M}].$$
Using the above inequality, we write 
\begin{align}
\tilde V_N&(x_k^{cl}) = -\tilde l(x_k^{cl}, u_k^{cl}) + \sum_{i=1}^{N-1} \tilde l(x^*_{i|k},u^*_{i|k}) =- \tilde l(x_k^{cl}, u_k^{cl}) \nonumber\\
&+\! \sum_{i=1}^{i_0-1}\tilde l(x^*_{i|k},u^*_{i|k}) + \sum_{i=i_0}^{i_M-1}\tilde l(x^*_{i|k},u^*_{i|k}) + \sum_{i=i_M}^{N-1}\tilde l(x^*_{i|k},u^*_{i|k}). \label{eq:stadlim}
\end{align}
Furthermore, using Assumption \ref{as:cheap_block_turn} with \eqref{eq:prop_1a}-\eqref{eq:prop_1b}, there exists an admissible control sequence $\hat u_{0}$, $\hat u_{1}$, $\hat u_{2}, \ldots, \hat u_{{2M-1}}$ with corresponding feasible state trajectory  $\hat x_{0}$, $\hat x_{1}$, $\hat x_{2}, \ldots, \hat x_{{2M-1}}$, $\hat x_{{2M}}$ such that $\hat x_{0} = x^*_{i_0|k}\in\Bscr_{\delta}(\yline)$, $\hat x_{{2M}} = x^*_{i_M|k}\in\Bscr_{\delta}(\yline)$, $\dist(\hat x_{j}, \yline) \leq \rho(\delta)$ for all $ j\in[1:2M-1]$ and
\begin{equation}\label{eq:nakata}
\sum_{j=0}^{2M-1} \tilde l(\hat x_{j}, \hat u_{j}) \leq \tilde l_{\bar\delta} + \alpha_{\tilde l}(\rho(\delta))+\beta(M), 
\end{equation}
where $\tilde l_{\bar\delta} = \max_{x\in\Bscr_{\bar\delta}} \m\m \tilde l(x,u)$ and $\delta\in(0,\bar \delta]$. Consider a control sequence $u_{N+M-1,x^*_{1|k}}\in\uline^{N+M-1}(x^*_{1|k})$ constructed in the following way: 
\begin{equation} \label{eq:boromir}
u_{N\!+M\!-1,x^*_{1|k}}(j) \!=\! \begin{cases}
u^*_{j+1|k} \ \text{if}\  j\in [0:i_0-2]\m, \\
\hat u_{{j-i_0+1}} \ \text{if}\ j\in[i_0-1:i_0+2M-2]\m,  \\
u^*_{j+i_M-(i_0+2M-1)|k} \ \text{if}\ j\!\in\![i_0\!+\!2M\!-\!1\!:\!N\!+\!M\!-\!2]. 
\end{cases}
\end{equation}
Hence, the control sequence from state $x_{1|k}^*=x^{cl}_{k+1}$ is 
\begin{align*}
u_{N+M-1,x^*_{1|k}} \coloneqq (&u^*_{1|k}, u^*_{2|k}, \ldots, u^*_{i_0-1|k}, \hat u_{0}, \hat u_{1}, \ldots, \\
&\hat u_{{2M-2}}, \hat u_{{2M-1}}, u^*_{i_M|k}, u^*_{i_M+1|k}, \ldots, u^*_{N-1|k}).
\end{align*}
Using the control sequence of length $N+M-1$ as constructed in \eqref{eq:boromir}, the optimal value function $x^{cl}_{k+1}\coloneqq x^*_{1|k}$ satisfies
\begin{align*}
\tilde V_{N+M-1}(x^*_{1|k}) &\leq \tilde J_{N+M-1}(x^*_{1|k}, u_{N+M-1,x^*_{1|k}}) \\
&=\sum_{j=1}^{i_0-1}\tilde l(x^*_{j|k}, u^*_{j|k}) + \sum_{j=0}^{2M-1}\tilde l(\hat x_j, \hat u_j) + \sum_{j=i_M}^{N-1} l(x^*_{j|k}, u^*_{j|k}) .
\end{align*}
\begin{figure*}[t]
    \centering
    \begin{subfigure}{0.49\linewidth}
        \centering
        \includegraphics[width=0.75\textwidth]{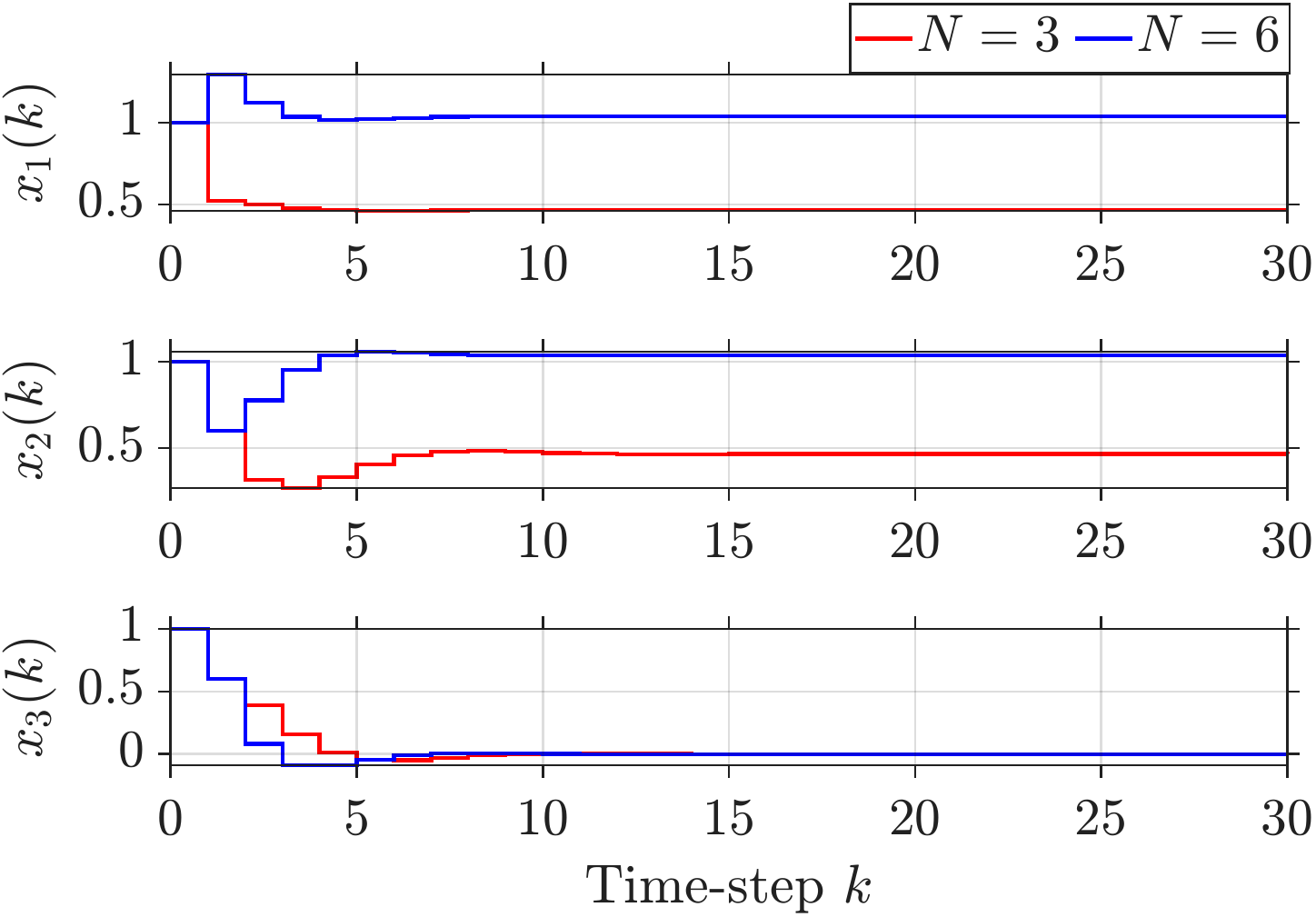}
        \caption{Closed-loop optimal state trajectories}
        \label{fig:CL_state_Lin_a}
    \end{subfigure}
    \hfill
    \begin{subfigure}{0.49\linewidth}
        \centering
        \includegraphics[width=0.75\textwidth]{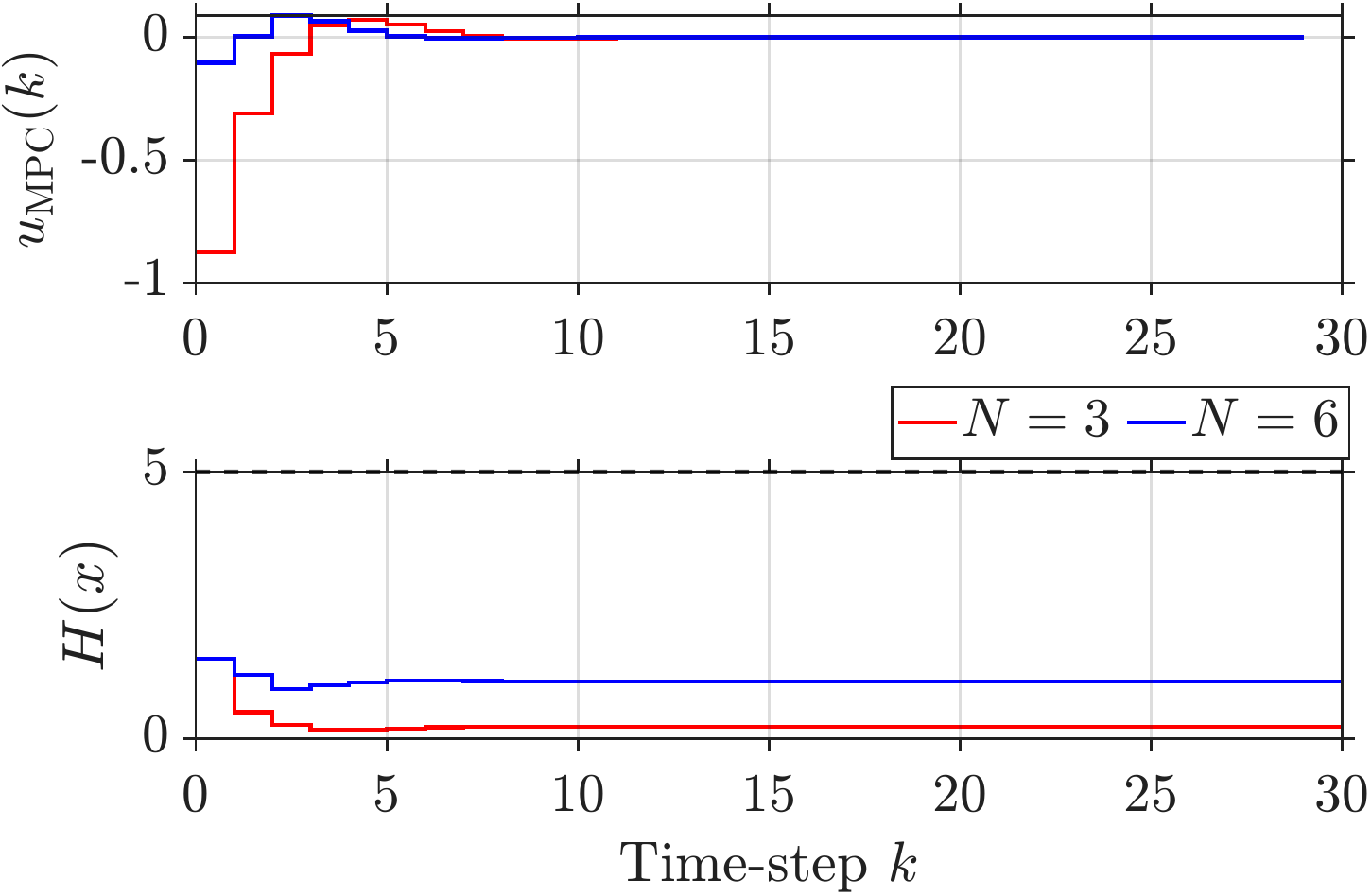}
        \caption{Closed-loop optimal input and $H(x)$}
        \label{fig:CL_state_Lin_b}
    \end{subfigure}

    \begin{subfigure}{0.49\linewidth}
        \centering
        \includegraphics[width=0.75\textwidth]{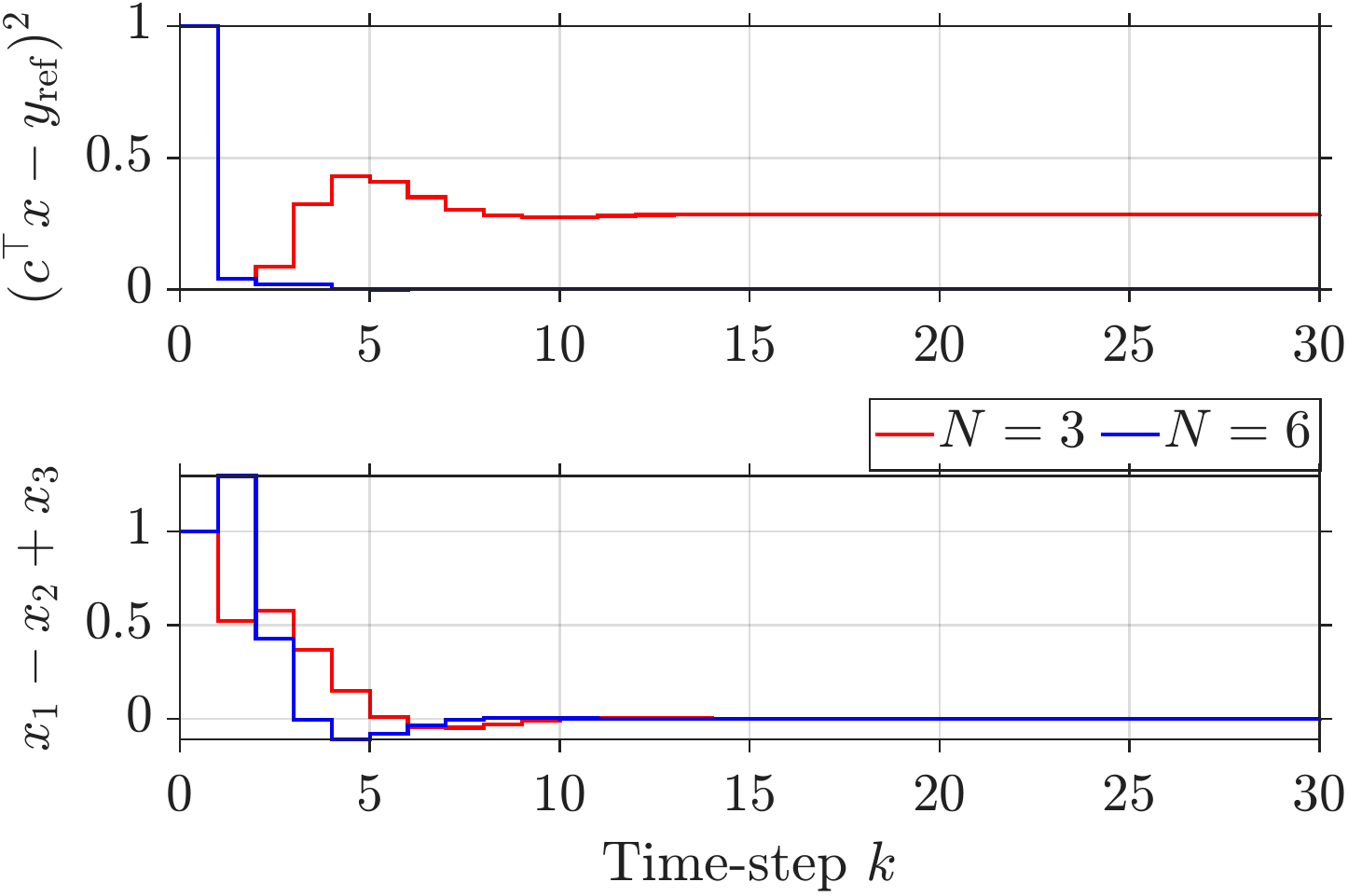}
        \caption{Individual components of $\JJJ_N$ in \eqref{eq:e_ocp}}
        \label{fig:CL_state_Lin_c}
    \end{subfigure}
    \hfill
    \begin{subfigure}{0.49\linewidth}
        \centering
        \includegraphics[width=0.75\textwidth]{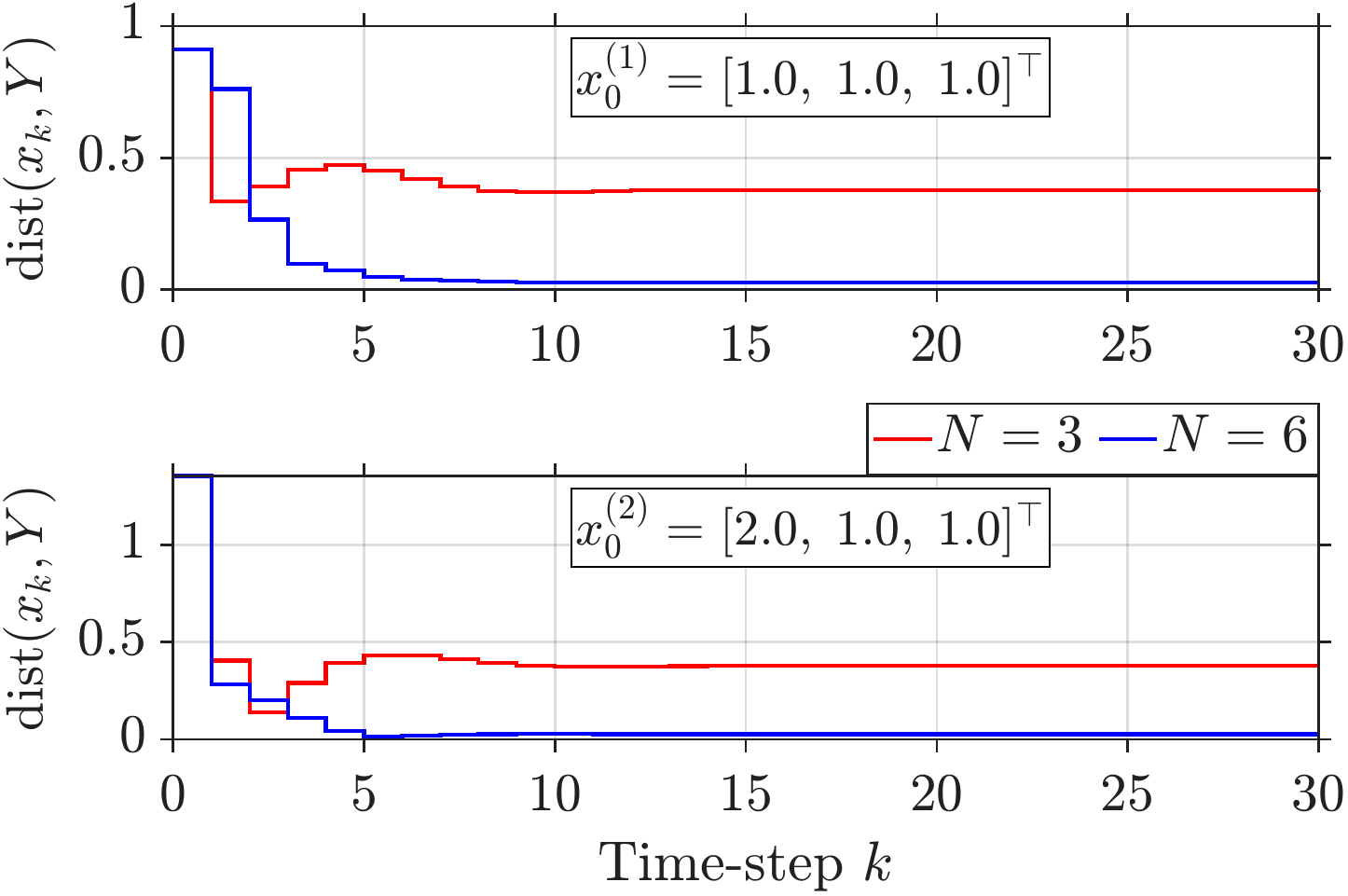}
        \caption{Turnpike for different initial conditions}
        \label{fig:CL_state_Lin_d}
    \end{subfigure}
    \caption{For Example 1, the plots are generated for two prediction horizons (i) $N=3$ (in red color) and (ii) $N=6$ (in blue color). The control constraint set is $[-5,\m 5]$, set $\xline_C$ is characterized by $C=5$, $c^\top = [0 \ 1 \ 1]$, and $y_{ref}=1$. For plots (a)-(c), the initial condition is chosen to be $x_0 = [1 \ 1\ 1]^\top$. Plot (d) shows the distance of the optimal state from the $\yline$ for two different initial conditions.}
    \label{fig:CL_state_Lin}
\end{figure*}

We rewrite the inequality above as
\begin{equation}\label{eq:legolas}
    \tilde V_{N}(x^{cl}_{k+1}) \leq \sum_{j=1}^{i_0-1}\tilde l(x^*_{j|k}, u^*_{j|k}) + \sum_{j=0}^{2M-1}\tilde l(\hat x_j, \hat u_j) + V_{N-i_{M}}(x^*_{i_M|k}) .
\end{equation}
using the following observations: $x^{cl}_{k+1} = x^*_{1|k}$ and
\begin{align}
 V_{N-i_M}(x^*_{i_M|k}) &= \sum_{j=i_M}^{N-1} l(x^*_{j|k}, u^*_{j|k})  \label{eq:baggins}\\
 V_{N}(x^*_{1|k})&\leq V_{N+M-1}(x^*_{1|k}) \label{eq:bilbo},
\end{align}
where \eqref{eq:baggins} is a result of principle of optimality which dictates that since $(x^*, u^*)$ is optimal over a horizon of length $N$ it must remain optimal for the horizon $N-i_M$ starting from $x^*_{i_M|k}$ and \eqref{eq:bilbo} is straightforward from  $\tilde l(x,u)\geq 0$ for all $x\in\xline_c$ and $u\in\uline$. On subtracting \eqref{eq:legolas} from \eqref{eq:stadlim}, we get 
\begin{align*}
   \tilde V_N(x^{cl}_{k+1}) - \tilde V_N&(x^{cl}_k)\leq  -\tilde l(x^{cl}_k, u_k^{cl}) +   \sum_{j=0}^{M-1}\Big[\tilde l(\hat x_j, \hat u_j)- \tilde l(x^*_{i_j|k}, u^*_{i_j|k})\Big]\nonumber\\[0.1em] 
   &+ \sum_{j=M}^{2M-1}\tilde l(\hat x_j, \hat u_j) + V_{N-i_{M}}(x^*_{i_M|k}) - \sum_{j=i_{M}}^{N-1}\tilde l(x^*_{j|k}, u^*_{j|k}).%
\end{align*}
Using \eqref{eq:baggins} and the fact that the rotated stage cost function is nonnegative, the aforementioned inequality simplifies to
\begin{equation*}\label{eq:tano}
    \tilde V_N(x^{cl}_{k+1}) - \tilde V_N(x^{cl}_k)\leq  -\tilde l(x^{cl}_k, u_k^{cl}) +   \sum_{j=0}^{M-1} \tilde l(\hat x_j, \hat u_j) + \sum_{j=M}^{2M-1}\tilde l(\hat x_j, \hat u_j).
\end{equation*}
Using \eqref{eq:nakata} in the above inequality, we get 
$$\tilde V_N(x^{cl}_{k+1}) - \tilde V_N(x^{cl}_k)\leq  -\tilde l(x^{cl}_k, u_k^{cl}) + \tilde l_{\bar\delta} + \alpha_{\tilde l}(\rho(\delta))+\beta(M),$$
which is \eqref{eq:gandalf} with $\epsilon(M,\delta, l_{\bar\delta})= \tilde l_{\bar\delta} + \alpha_{\tilde l}(\rho(\delta))+\beta(M).$  
\end{proof}
\begin{remark}[Choice of the set $\xline_C$]
In presence of additional state constraint $\xline$ such that $x_k\in\xline$ for each $k\in\nline_0$ should hold for \eqref{eq:e_ocp}, we can fix $\xline_c$ by selecting $C = \sup\{C_1\in(0,\infty)\mid X_{C_1}\subseteq\xline\}$. Hence, the constraint $\xline_C$ in \eqref{eq:e_ocp} becomes an approximation of the additional constraint set $\xline$. Such an approximation will yield conservative results and comes with a clear trade-off. Clearly, increasing $C$ will enlarge $\xline_C$ and improve the approximation $\xline$ but enlarging $X_C$ will lead to larger $M$ in Assumption \ref{as:reach_Y}. Hence an increase in choice of horizon $N$ in Theorem \ref{th:Gr13_Th4.2}. 
\end{remark}
\begin{figure*}[t]
    \centering
    \begin{subfigure}{0.49\linewidth}
        \centering
        \includegraphics[width=0.75\textwidth]{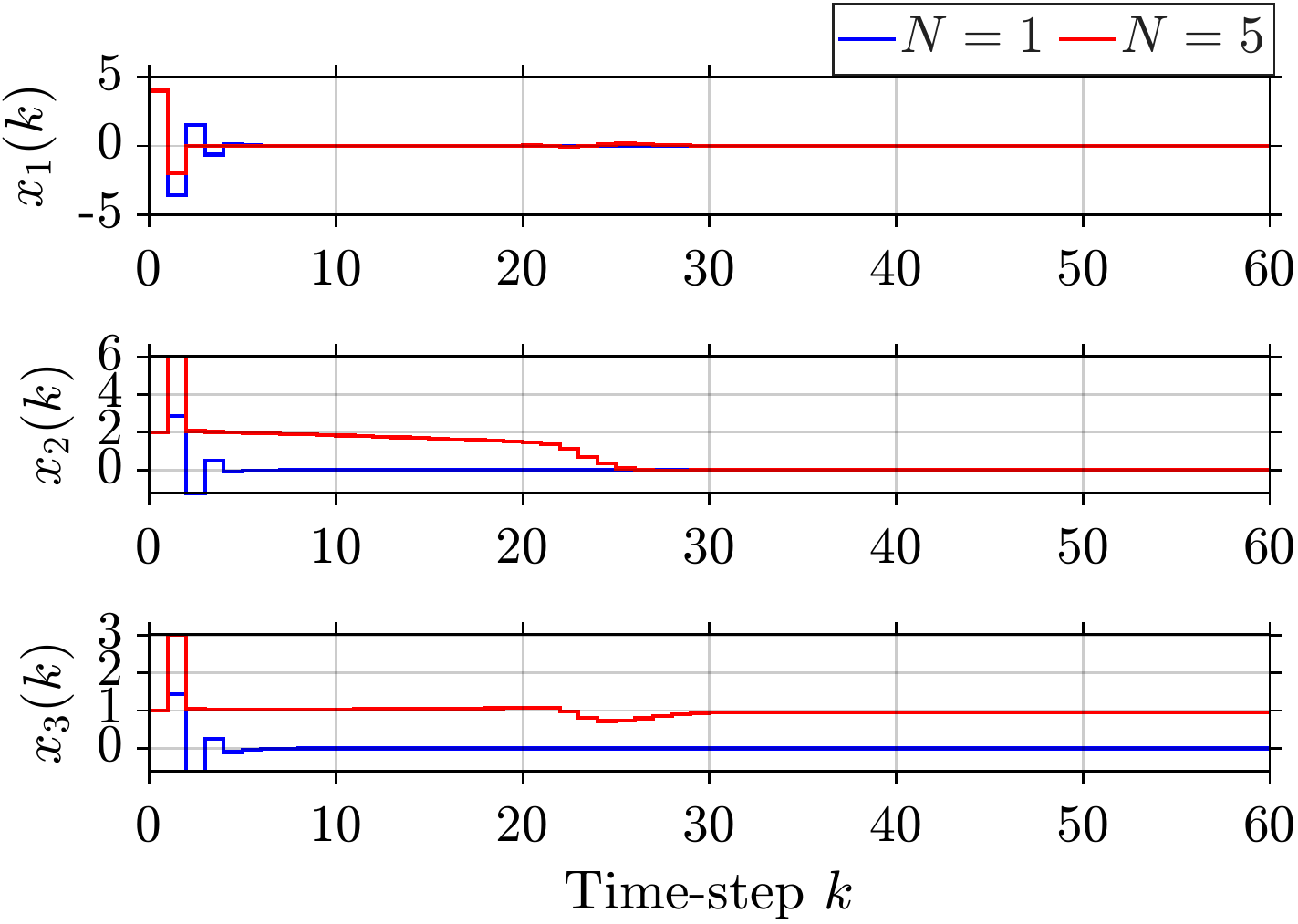}
        \caption{Closed-loop optimal state trajectories}
        \label{fig:CL_state_a}
    \end{subfigure}
    \hfill
    \begin{subfigure}{0.49\linewidth}
        \centering
        \includegraphics[width=0.75\textwidth]{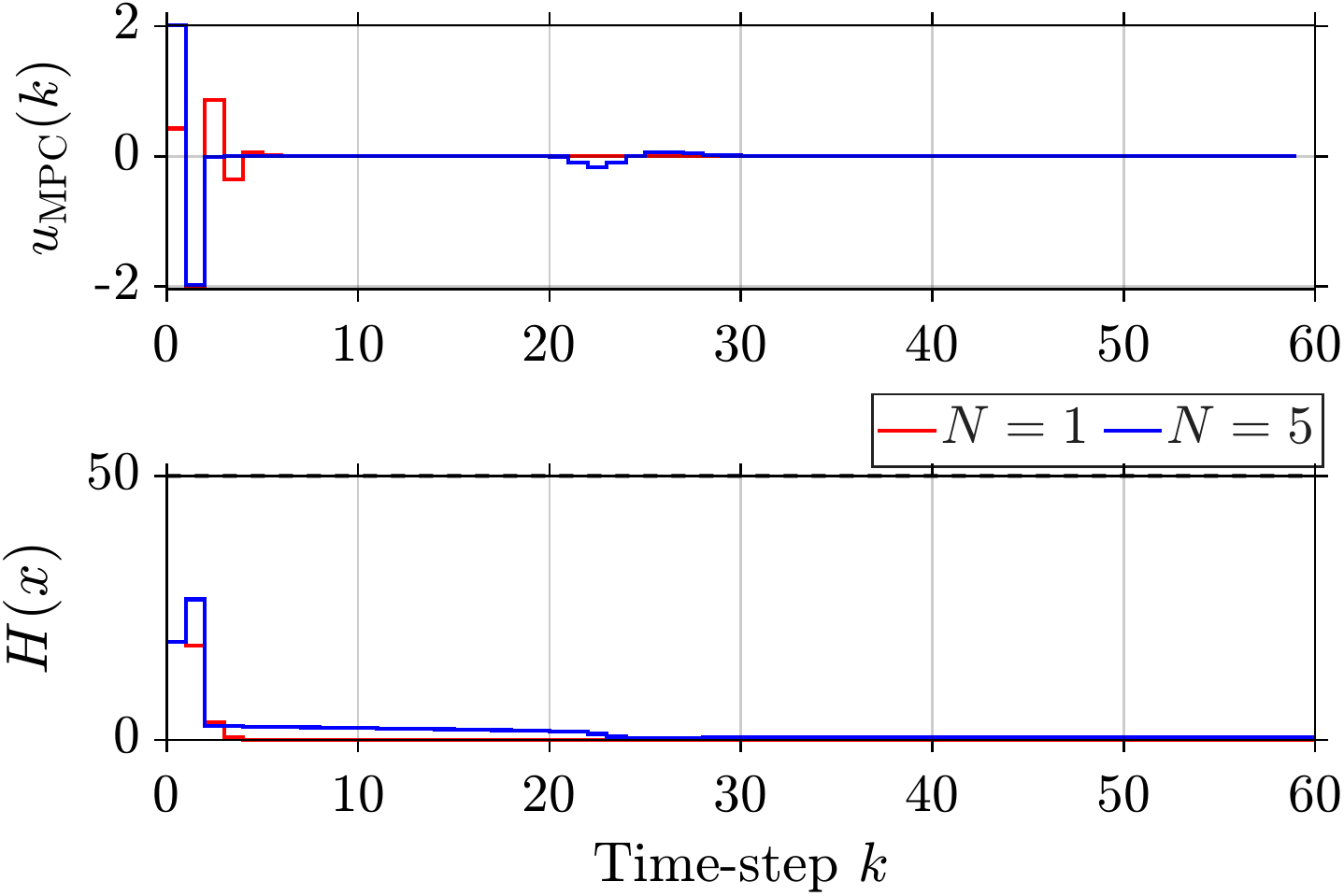}
        \caption{Closed-loop optimal input and Hamiltonian}
        \label{fig:CL_state_b}
    \end{subfigure}

    \begin{subfigure}{0.49\linewidth}
        \centering
        \includegraphics[width=0.75\textwidth]{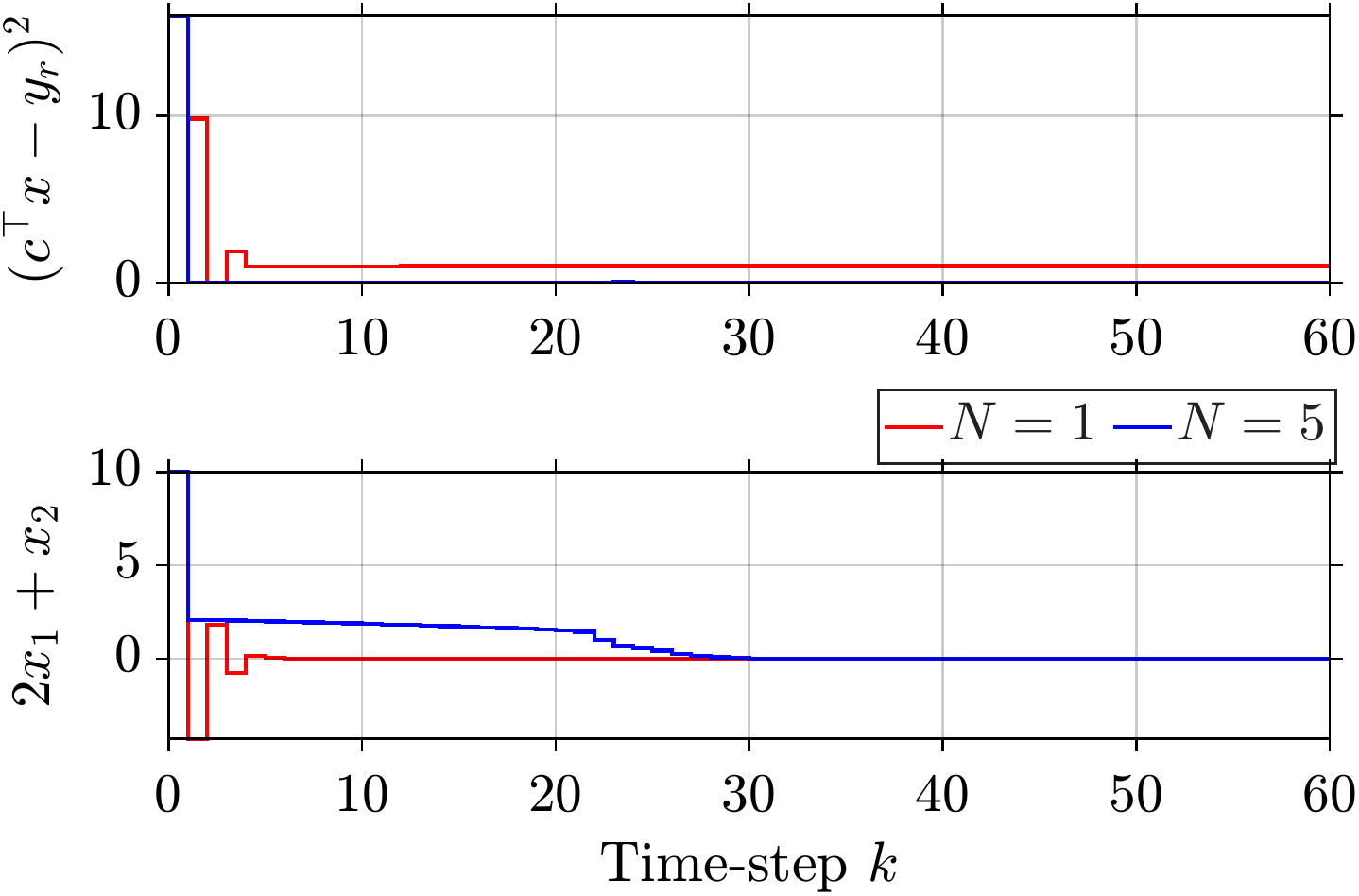}
        \caption{Individual components of cost $\JJJ_N$}
        \label{fig:CL_state_c}
    \end{subfigure}
    \hfill
    \begin{subfigure}{0.49\linewidth}
        \centering
        \includegraphics[width=0.75\textwidth]{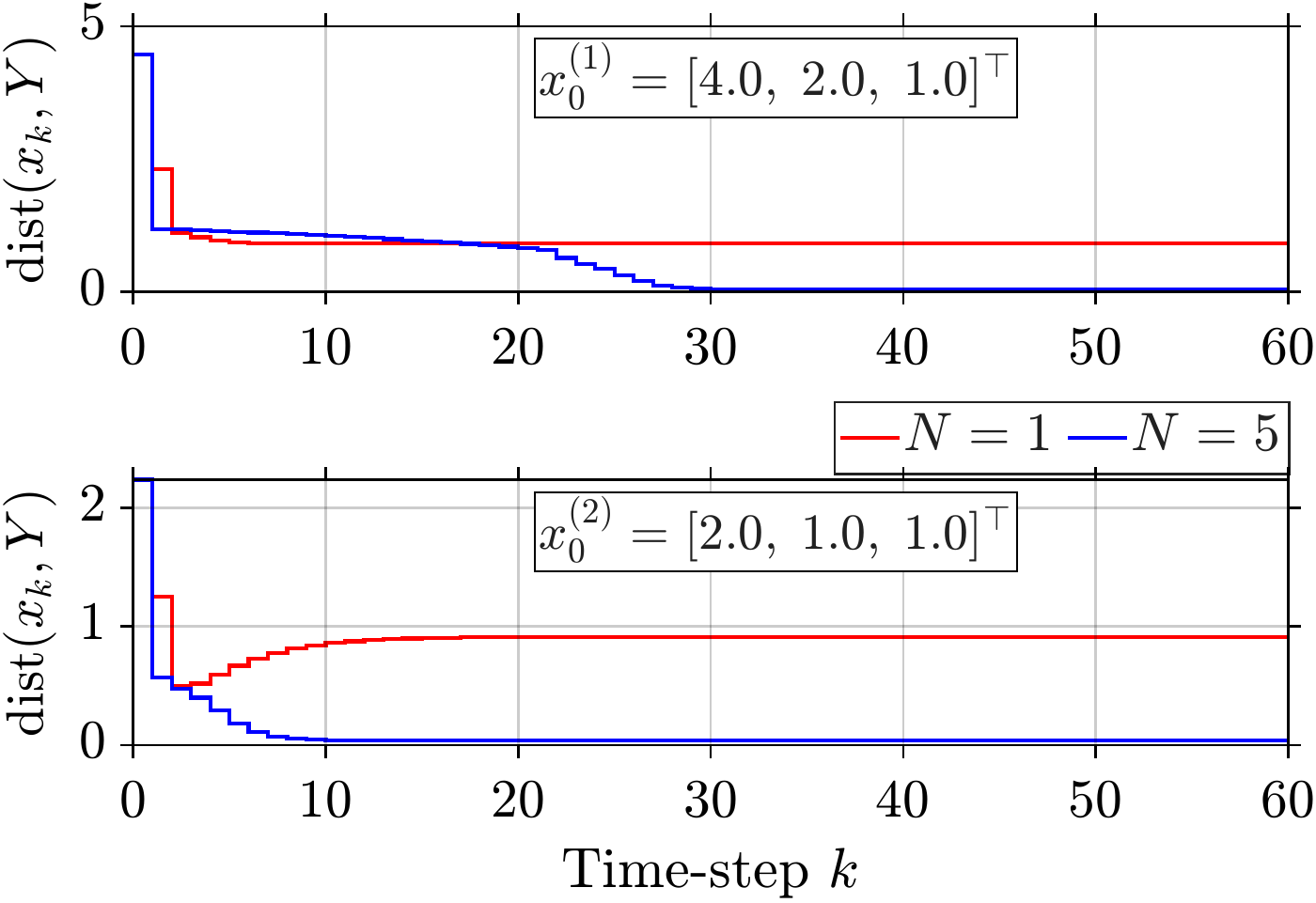}
        \caption{Turnpike for two different initial conditions}
        \label{fig:CL_state_d}
    \end{subfigure}

    \caption{For Example 2, plots~(a)-(d) are generated by running the algorithm for two prediction horizons (i) $N=1$ (in red) and  (ii) $N=5$ (in blue). We set $\uline =[-5,\m 5]$, $\xline_C$ is characterized by $C=50$,  $c^\top = [1 \ 0 \ 1]$, and $y_{ref}=1$. For plots (a)-(c), the initial condition is chosen to be $x_0 = [4 \ 2\ 1]^\top$. Plot (d) shows the distance of the optimal state from $\yline$ for two different initial conditions.}
    \label{fig:CL_state}
\end{figure*}

\section{Numerical examples}\label{se:Mot_num_Ex}
\ \ \ In this section, we consider two numerical example to illustrate the results developed in the previous section. In Example 1 we consider a linear discrete-time pH system and in Example 2 we simulate the results for a nonlinear pH system. To perform the numerical simulations, we use \emph{CasADi}, \cite{An:19}.
\subsection{Example 1: Linear pH system}
Consider a linear discrete-time pH system described by \eqref{eq:phs_ddr_state}-\eqref{eq:phs_ddr_output} with 
\begin{equation*}
    J_1(x) = \bbm{1.5 & -0.5 & -0.5 \\ -0.5  & 1.5 & 0.5 \\ 0.5 & -0.5 & 1.0}, \ 
    J_2(x) = \bbm{0.5 & 0.5 & 0.5 \\  0.5  &  0.5 &  -0.5 \\-0.5  &  0.5  &  1.0 },
\end{equation*}
$Q=I_3$, $b^\top = \bbm{1& 0 &0}$, $c^\top = \bbm{0& 1& 1}$, and $y_{ref}=1$. A straightforward computation shows that $\zline = \{z_1,z_2,z_3\in\rline\mid z_1-z_2+z_3 = 0\}$ and $\xline_r = \{z_1,z_2,z_3\in\rline\mid z_2+z_3 = 1\}$. We fix $u_{\max}= 5$, set $C=5$ and define the set $\xline_C$ accordingly. Note that \emph{Assumption \ref{as:feasiblity_track_ddr}} holds.  Using these sets, define $\yline$ as per \eqref{eq:yline}:
\begin{align}\label{eq:yline_ex_1}
\yline &=\Big\{\bbm{2\alpha-1& \alpha& 1-\alpha}^\top\in\rline^3 \mid \alpha\in[-0.76, 1.75]\Big\}.
\end{align}
It can be checked that every point in $\yline$ is returnable to $\yline$ within six-steps using an admissible control sequence and the total accumulated rotated cost of such an excursion has a uniform upper bound. Hence \emph{Assumption \ref{as:tube_Y}} holds. The simulation plots for this example are shown in Figure \ref{fig:CL_state_Lin} and Figure \ref{fig:3D_state_lin}. Note that for $N=3$, the tracking error remains positive, while it goes to zero with $N=6$, as seen in Figure \ref{fig:CL_state_Lin}(c). The closed-loop state and input trajectory satisfy the constraint of associated \eqref{eq:e_ocp} at all time-instants. The phase plot depicting the convergence of the closed-loop trajectories is depicted in Figure \ref{fig:3D_state_lin} for both the prediction horizons.  
 \begin{figure}[h]
    \centering
    \includegraphics[width=0.65\textwidth]{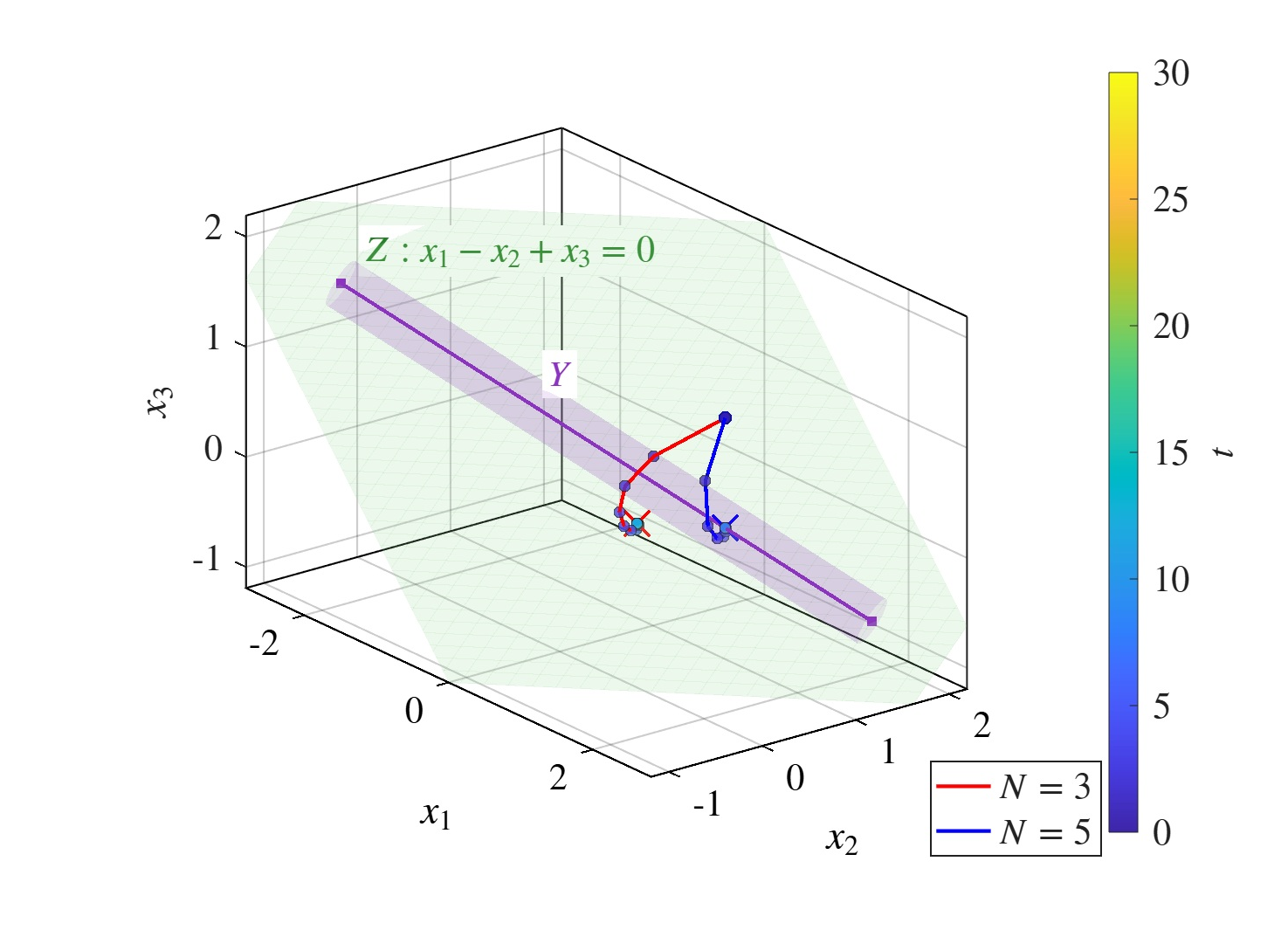} \vspace{-5mm}
    \caption{For Example 1 in Section \ref{se:Mot_num_Ex}, we show the phase plot of the closed-loop system \eqref{eq:cl_MPC} obtained by applying the receding horizon algorithm. The conservative manifold $\zline$ (depicted in green color) is described by equation $x_1-x_2+x_3=0$ and the set $\yline$ (depicted in purple color) is defined in \eqref{eq:yline_ex_1}.}
    \label{fig:3D_state_lin}
\end{figure}
\subsection{Example 2: Nonlinear pH system}
Consider a discrete-time nonlinear pH system described by \eqref{eq:phs_ddr_state}-\eqref{eq:phs_ddr_output} 
with
$$J_1(x) = \bbm{1 + A(x_{1,2}) & -\tfrac{1}{2} & 0\\ 1 & 1 & 0 \\ 0 & 0 & 1},\ J_2(x) = \bbm{1-A(x_{1,2}) & \tfrac{1}{2} & 0\\ -1 & 1 & 0 \\ 0 & 0 & 1},$$
where $x_{1,2}=[x_1, x_2]^\top$ and $A(x_{1,2}) = \tfrac{1}{4}(4\|x_{1,2}\|^2+1)^2$. Hence,
$$J_1^{-1}(x) = \frac{1}{1 + A(x_{1,2}) + \tfrac{1^2}{2}}\bbm{1 & \tfrac{1}{2} & 0\\ -1 & 1+A(x_{1,2}) & 0 \\ 0& 0 & 1}.$$
To identify the conservative manifold $\zline$, we compute 
$$R^{\tfrac{1}{2}}(x)QJ_1^{-1}(x)x =  \frac{4\|x_{1,2}\|^2+1}{1 + \tfrac{1}{4}(4\|x_{1,2}\|^2+1)^2 + \tfrac{1}{2}}\bbm{x_1+\tfrac{1}{2}x_2\\0\\0}.$$
We fix $\uline = [-5,5]$. Clearly, $\uline$ is a compact and convex set with $0\in\operatorname{int}\uline$. 
We obtain $\zline = \{(x_1,x_2, x_3)\in\rline^3\mid 2x_1+x_2 = 0\}$ which turns out to be a linear subspace. Next, we consider the finite-horizon optimal control problem stated in \eqref{eq:e_ocp} with $c = \bbm{1 & 0 & 1}^\top$ and $y_{ref} = 1$. Hence, $\xline_r = \{x\in\rline^3 \mid x_1+x_3 = 1\}$.  We fix $\xline_C = \{x\in\rline^3\mid 2x_1^2+x_2^2+x_3^2 \leq 100\}$ and use \eqref{eq:yline} to compute \vspace{-3mm}
\begin{equation}
\yline = \Big\{\bbm{\alpha&-2\alpha&1-\alpha}^\top \mid \alpha\in[-3.60, 3.91]\Big\} \label{eq:yline_ex_2} \vspace{-2mm}
\end{equation}
Clearly, $\yline$ is nonempty as $x = [0, 0, 1]^\top\in\yline$ and compact. The map $R^{\nicefrac{1}{2}}(x)QJ_1^{-1}(x)x$ is twice continuously differentiable. Hence \emph{Assumption \ref{as:feasiblity_track_ddr}} holds. For any $z\in\yline$, we have $J_1^{-1}(z)J_2(z)z = \bbm{-\alpha & -2\alpha& 1-\alpha}^\top$ with $\alpha\in[-3.60, 3.91]$. Letting $u = \alpha$, we get that 
$J_1^{-1}(z)J_2(z)z+bu = \bbm{0&0&1}^\top\in\yline$.
Hence, for this example the set $\yline$ is control invariant. Infact, with $u=\alpha$ any $z\in\yline$ is driven to $\bbm{0&0&1}^\top$ in a single time step. Hence \emph{Assumption \ref{as:tube_Y}} holds with $M_1(\delta)=1$ and the accumulated rotated cost being zero and we can set $\beta_0(L)=0$. Consequently, the bound in Theorem \ref{th:meas_turn_Y} becomes uniform with respect to horizon length $N$ and we retrieve the standard measure turnpike property for this example. 
 \begin{figure}[h]
    \centering
    \includegraphics[width=0.65\textwidth]{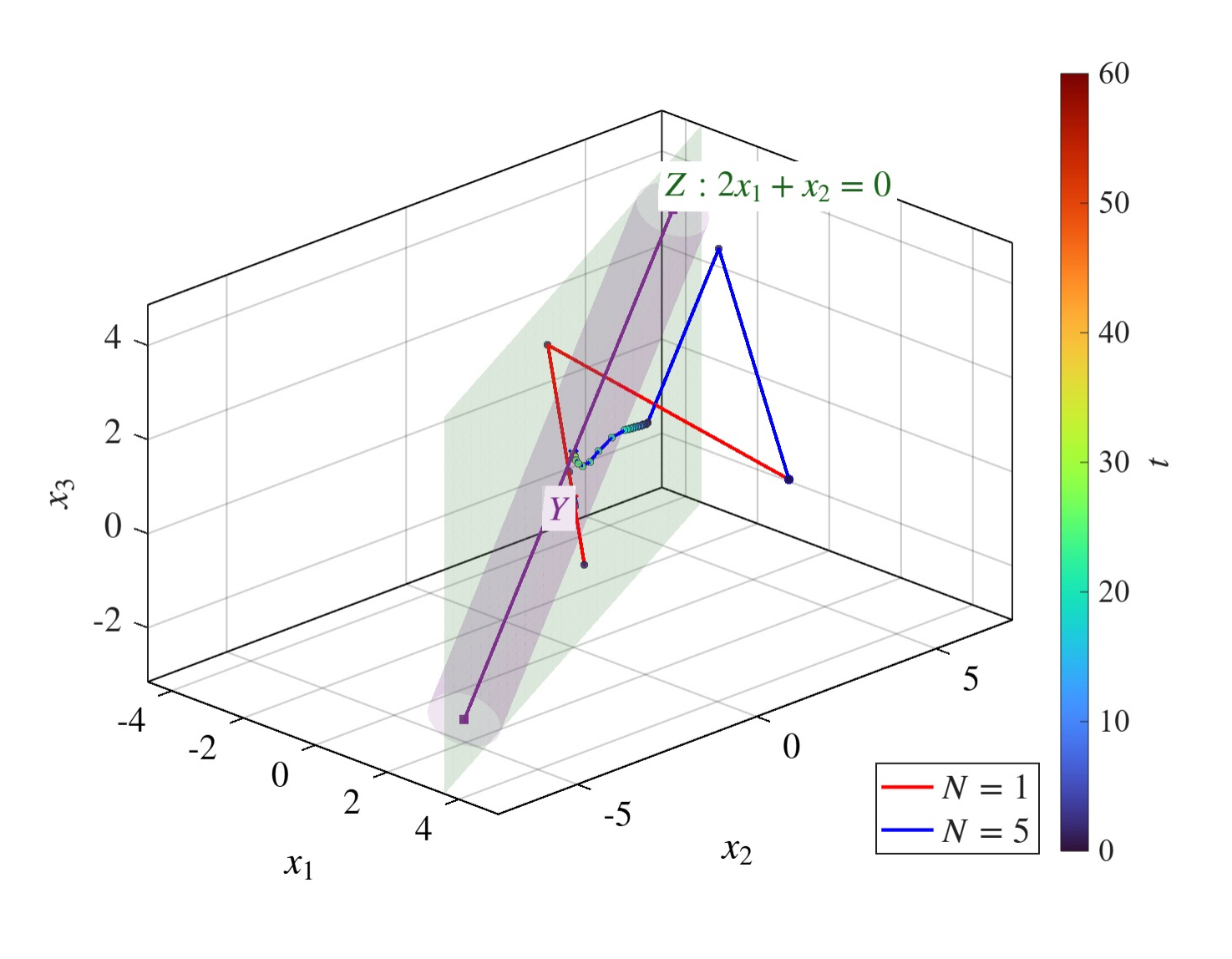} \vspace{-4mm}
    \caption{For Example 2, we show the phase plot of the closed-loop system \eqref{eq:cl_MPC} obtained by applying the receding horizon algorithm. The conservative manifold $\zline$ (depicted in green color) is described by equation $2x_1+x_2=0$ and the set $\yline$ (in purple color) is defined in \eqref{eq:yline_ex_2}.}
    \label{fig:3D_state}
\end{figure}
The simulation plots for this example are in Figure \ref{fig:CL_state} and Figure \ref{fig:3D_state}. Note that for $N=1$, the tracking error remains positive, while it goes to zero with $N=5$, as seen in Figure \ref{fig:CL_state}(c). The closed-loop state and input trajectory satisfy the constraint of associated \eqref{eq:e_ocp}, see Figure \ref{fig:CL_state}(b). Figure \ref{fig:CL_state}(d) shows phase-plot of the closed-loop trajectories. 
\section{Conclusion}\label{se:concl}
In this work, we address an energy-optimal output stabilization problem for discrete-time port-Hamiltonian system using economic model predictive control algorithm. The dynamics of the pH system are described in difference and differential form. Under the assumption that the optimal control problem is strictly dissipative with respect to a set $\yline$ and exploiting the underlying pH structure, we establish a weaker measure turnpike property of the open-loop optimal solutions. The presented turnpike property is required since the set $\yline$ is not guaranteed to be control invariant. Using the measure turnpike property, we establish closed-loop practical stability for closed-loop system obtained by applying a receding horizon algorithm. We illustrate our results using two numerical examples. {For future works, we will establish performance bounds on the closed-loop solutions of \eqref{eq:e_ocp} obtained using dissipativity-based MPC algorithm and extend the ideas to pH systems with state-dependent input map and non-quadratic Hamiltonian.}
\section*{Acknowledgment}\vspace{-3mm}
Author V.K. Singh acknowledges Dr. Arijit Sarkar for discussions during the initial stage of this research.
\bibliographystyle{elsarticle-num}
\bibliography{ref_SC}
\end{document}